\documentclass[12pt, draftclsnofoot, onecolumn]{IEEEtran}%

\ifCLASSINFOpdf
  \usepackage[pdftex]{graphicx}
  \graphicspath{{../pdf/}{../jpeg/}}
  \DeclareGraphicsExtensions{.pdf,.jpeg,.png}
\else
  \usepackage[dvips]{graphicx}
  \graphicspath{{../eps/}}
  \DeclareGraphicsExtensions{.eps}
\fi

\usepackage{epsfig}
\usepackage{booktabs}
\usepackage{epstopdf}
\usepackage{mathtools}

\usepackage{array}

\usepackage{bm}
\usepackage{mathrsfs}
\usepackage{epsfig}
\usepackage{subfigure}
\usepackage{algorithmic}
\usepackage{algorithm}
\usepackage{mathrsfs}

\usepackage{amssymb}
\usepackage{amsmath}
\usepackage{cite}
\usepackage{url}
\usepackage{xcolor}
\usepackage{cite,graphicx,amsmath,amssymb}
\usepackage{subfigure}
\usepackage{fancyhdr}
\usepackage{mdwmath}
\usepackage{mdwtab}
\usepackage{caption}
\usepackage{amsthm}
\newcolumntype{P}[1]{>{\centering\arraybackslash}p{#1}}

\newtheorem{remark}{Remark}
\newtheorem{theorem}{Theorem}

\newtheorem{lemma}{Lemma}

\newtheorem{corollary}{Corollary}

\newtheorem{proposition}{Proposition}

\begin{document}
\title{Joint Reflecting and Precoding Designs for SER Minimization in Reconfigurable Intelligent Surfaces Assisted MIMO Systems}

\author{
Jia~Ye,~\IEEEmembership{Student Member,~IEEE}, Shuaishuai Guo,~\IEEEmembership{Member,~IEEE},  \\and Mohamed-Slim Alouini, ~\IEEEmembership{Fellow,~IEEE}
\thanks{J. Ye and M.-S. Alouini are with Electrical Engineering, Computer Electrical and Mathematical Sciences $\&$ Engineering (CEMSE) Division, King Abdullah University of Science and Technology (KAUST), Thuwal, Makkah Province, Kingdom of Saudi Arabia, 23955-6900 (email:jia.ye@kaust.edu.sa; slim.alouini@kaust.edu.sa).

S. Guo was with CEMSE Division, KAUST, Thuwal, Makkah Province, Kingdom of Saudi Arabia, 23955-6900. He is now with Shandong Provincial Key Laboratory of Wireless Communication Technologies and and School of Control Science and Engineering, Shandong University, Jinan 250061, China (email: shuaiguosdu@gamil.com).}

}

% make the title area
\maketitle

\begin{abstract}
%\boldmath
This paper investigates the use of a reconfigurable intelligent surface (RIS) to aid point-to-point multi-data-stream multiple-input multiple-output (MIMO) wireless communications. With practical finite alphabet input, the reflecting elements at the RIS and the precoder at the transmitter are alternatively optimized to minimize the symbol error rate (MSER). In the reflecting optimization with a fixed precoder, two reflecting design methods are developed, referred as eMSER-Reflecting and vMSER-Reflecting. In the optimization of the precoding matrix with a fixed reflecting pattern, the matrix optimization is transformed to be a vector optimization problem and two methods are proposed to solve it, which are referred as MSER-Precoding and  MMED-Precoding. The superiority of the proposed designs is investigated by simulations. Simulation results demonstrate that the proposed reflecting and precoding designs can offer a lower SER than existing designs with the assumption of complex Gaussian input. Moreover, we compare RIS with a full-duplex Amplify-and-Forward (AF) relay system in terms of SER to show the advantage of RIS.
\end{abstract}

% Note that keywords are not normally used for peerreview papers.
\begin{IEEEkeywords}
%\boldmath
Reconfigurable intelligent surface, phase shifts design, precoder design, symbol error rate.
\end{IEEEkeywords}

%\IEEEpeerreviewmaketitle

\section{Introduction}
Better communication qualities like smaller delay, higher transmission rate, lower symbol error probability (SER), less energy consumption always attract researchers' and users' eyes. In the last decade, wireless networks have been greatly improved thanks to various technological advances, including massive multiple-input multiple-output (Massive MIMO), millimeter wave (mmWave) communications, and ultra-dense deployments of small cells. However, some critical issues such as the hardware complexity and system update cost are still blocking their steps to the practical implementation \cite{SZ}. To satisfy the growing demands with satisfying communication quality and achieve challenge goals in a green and effective way, reconfigurable intelligent surface (RIS) was proposed as a promising solution in the coming 5G or beyond era \cite{MDR}. 

RIS is a planar array comprising of a large number of nearly passive, low-cost, reflecting elements such as positive-intrinsic-negative (PIN) diodes, which are used for altering the phase of the reflected electromagnetic wave with reconfigurable parameters and smart controller. RIS can be implemented by various materials, including reflect arrays \cite{XT,SV}, liquid crystal metasurfaces \cite{SF}, ferroelectric films, or even metasurfaces\cite{CL}.   

In the beginning, reflecting surfaces were not considered in wireless communication systems because these surfaces only had fixed phase shifters, which could not adapt the phase modification in time-varying wireless propagation environments. Recently,  advanced micro-electrical-mechanical systems (MEMS) and metamaterials have been investigated as a solution to this issue, which enables the real-time reconfiguration reflecting surfaces \cite{TJ}. Compared to existing related technologies such as multi-antenna relay \cite{BS}, backscatter communication\cite{GY} and active intelligent surface-based massive MIMO \cite{SH}, passive RIS does not require any dedicated energy source for either decoding, channel estimation, or transmission. It only reflects the ambient radio frequency (RF) signals in a passive way without a transmitter module. Moreover, the reflect-path signal through RIS carries the same useful information as well as the direct-path signal without any information of its own, which will not cause any additional interference. 

RIS stands out among these technologies by smartly adjusting the phase shifts induced by all the elements with advantages like overcoming unfavorable propagation conditions, enriching the channel with more multi-paths, increasing the coverage area, improving the received signal power, avoiding interference, enhancing security/privacy and consuming very low energy. On the other hand, the lightweight and conformal geometry of RIS can enable the installment onto the facades of buildings in outdoor communication environments or the ceilings and walls of rooms in indoor communication environments, which provides high flexibility and superior compatibility for practical implementation \cite{LS}. Also, integrating RIS into the existing networks can be made transparent to the users without the need for any change in the hardware and software of their devices.

\subsection{Prior Work}
Due to the advantages mentioned above, RIS has attracted more and more researchers' attention in the last few years. Many interesting and contributed works appeared in the industry \cite{NTT} and academia field regarding to channel estimation \cite{BZ}, \cite{ZQ}, multi-cell networks \cite{CP}, comparison with relaying systems \cite{EBR} and massive MIMO networks \cite{EBM}, combination with other technologies like SWIPT \cite{QW3}, \cite{CP1}, millimeter wave \cite{XT}, \cite{PW}, Terahertz communication \cite{WC} and so on. A large and growing body of literature has optimized the RIS parameters and systems' structure to improve system performance, such as outage probability \cite{XT}, signal power \cite{QW,QW2,QI,CH,CH1}, signal-to-interference-plus-noise ratio (SINR) \cite{QI}, data rate \cite{CP}, \cite{YH,MJ,MJ1,MJ3,XT1,HG,SA}, spectral efficiency (SE) \cite{CH,CH1,XT1}, secrecy achievable rate \cite{CH2,GX,HS,MC} and so on. To describe the status of research, we survey some of them as follows before introducing our work.

In \cite{XT}, the authors enhanced the mmWave link robustness and optimized the link outage probability by deploying smart reflect-arrays when the direct links are blocked by obstructions. The authors also investigated the optimal beam direction for randomly moving devices without any location information, incorporating the antenna sector selection at the access point (AP) and the mobile user as well as the configuration of the smart reflect-arrays. 

For maximizing the total received signal power, preliminary contributions appeared in \cite{QW,QW2}. Specifically, a centralized algorithm with the global channel state information (CSI) and a low-complexity distributed algorithm for designing the phase shifts were proposed in \cite{QW}. With the same system setup in \cite{QW}, the authors of \cite{QW2} considered a more practical case when the RIS only has a finite number of discrete phase shifts in contrast to the continuous phase shifts. It can be seen that the asymptotic squared power gain of RIS-aided multiple-input single-output (MISO) shown in \cite{QW} with continuous phase shifts still holds with discrete phase shifts with a stable performance loss gap between the two conditions. The authors of \cite{CH} and \cite{CH1} maximized the energy-efficiency (EE) and SE of a RIS-assisted multi-user MISO system by designing both the transmit power allocation at the BS and the phase elements of the RIS. 

By using the knowledge of only the channel large-scale statistics instead of the global CSI, \cite{QI} designed an optimal linear precoder and the power allocation at the base station (BS) as well as the RIS phase matrix to maximize the minimum SINR in multi-user RIS-assisted MISO communication systems. It was shown that the RIS-assisted system can achieve power gains with a much fewer number of active antennas at the BS. 

The authors in \cite{MJ,MJ1,MJ3} analyzed the approximated uplink ergodic rate of a Rician fading system and derived an optimal size of a RIS unit, where users are mapped to a limited area of the entire RIS. It was shown that RIS can bring improved reliability with a significantly reduced area for antenna deployment compared to massive MIMO. Targeting for practical implementation, \cite{EB} investigated an optimal phase shift design to maintain an acceptable degradation of the ergodic capacity. The authors in \cite{CH2} proposed a provably convergent, low-complexity method to maximize the system sum-rate. It was shown that a nearly “interference-free” zone can be established in the proximity of the RIS thanks to its spatial interference nulling/cancellation capability. The authors in \cite{CP} maximized the weighted sum rate in the RIS-assisted multicell MIMO system, where a RIS located at the cell boundary of multiple cells to assist the transmission and alleviate the inter-cell interference. 

Moreover, using RIS to secure wireless communications is also a hot topic in this field. The authors of \cite{GX} jointly optimized the transmit beamforming with artificial noise jamming to maximize the achievable secrecy rate. The physical-layer security was also studied in \cite{HS}, which aims to maximizing system secrecy rate under source transmit power constraint on the transmitter and the phase shifts unit modulus constraints on RIS. By jointly designing the transmit beamforming of the AP and the reflect beamforming of the RIS, the authors of \cite{MC} optimized the secrecy rate of the legitimate communication link considering the presence of an eavesdropper.

In all the studies reviewed here, RIS is a promising technology to satisfy the growing demand for data rates and communication quality. Current research on the joint optimization of beamforming and reflecting in RIS-aided communications paid more attention to maximizing SE, EE, signal-to-noise ratio (SNR) and so on assuming Gaussian-distributed signal input. However, the assumption with Gaussian-distributed input signals is impractical, since the Gaussian-distributed signals are not bounded. In practical, the widely adopted is finite discrete constellation signals, such as phase-shift keying (PSK) or quadrature amplitude modulation (QAM) signals. The authors in \cite{WY} showed that the probability distribution function of a standard Gaussian-distributed signal is significantly different from the probability mass function of a quadrature phase-shift keying (QPSK) signal.   Based on previous research on MIMO communications without the aid of RIS, the optimization with the Gaussian input assumption is far from optimal in MIMO communications with finite alphabet input, and also the optimization with finite alphabet input is much more complicated \cite{WY}. In our paper, we are interested in the joint optimization of beamforming and reflecting to minimize SER for RIS-aided MIMO communications assuming finite alphabet input. Although the authors in\cite{EB} and \cite{EB1} evaluated the error performance of RIS-based communication systems, they do not do the optimization regarding the phase shifts and precoder. To the best of our knowledge, there is no work to jointly optimize the reflecting and precoding for RIS-assisted MIMO communications based on the SER minimization criterion with finite alphabet input. The use of SER as the metric for performance optimization calls for a thorough investigation, which motivates our work.

\subsection{Contributions}
As was pointed out in the previous subsection, we are interested in the joint optimization of precoding and reflecting to minimize SER for RIS-aided MIMO communications assuming finite alphabet input. The optimization objective is different from existing works. The challenge of the optimization lies in that the beamforming and reflecting affect all received signal vectors. The optimization complexity increases with the square growth of the number of transmit signal vector candidates, because the mutual Euclidean distances among different noise-free received signal vectors jointly affect the SER. We transform the SER minimization problem into several problems that can be addressed by existing optimization techniques. 
For clearness, we list our contributions as follows:
\begin{itemize}
\item This paper considers a RIS-enhanced point-to-point multiple-data-stream MIMO communication system, where a multiple-antenna transmitter serves a multiple-antenna receiver with the help of a RIS. 
We formulate the SER minimization problem regarding the precoding and reflecting for such a system. Since the precoding and reflecting are coupled in affecting the objective function and isolated physically, we resort to an alternating strategy, which is widely used in literature (e.g., \cite{CP1, CP,HG,QWW}) to maximize SE, EE, SNR, etc.

\item Regarding the reflecting design, we reformulate the reflecting optimization to find the optimal reflecting angle subsequently to minimize the symbol error rate (MSER) and solve it by a coordinate descent method, referred as element-wise MSER (eMSER) reflecting. The eMSER-Reflecting is analyzed to have high computational complexity. To reduce the complexity, we transform a non-convex constraint of the problem to be a convex constraint and solve the optimal reflecting vector jointly through a vector gradient descent method, referred as vMSER reflecting. 

\item Regarding the precoding design, we express the channel matrix, the precoding matrix, and the transmitted symbol vectors in new forms and transform the precoding matrix optimization to be a vector optimization problem. The reformulated problem can be solved by a projected gradient descent method named MSER-Precoding. In the high SNR regime, the minimizing SER criterion is shown to be equivalent to the maximizing the minimum Euclidean distance (MMED) criterion. Based on the fact, the reformulated problem is reduced to be a quadratically constrained quadratic program (QCQP) problem and we propose a corresponding MMED-Precoding method. MMED-Precoding is shown to achieve comparable performance with considerable computational complexity reduction. 

\item Moreover, a thorough investigation is conducted by simulations. The effectiveness of our proposed algorithms is validated by comparing them with the solution gained through an exhaustive search in performance, computation complexity, and central processing unit (CPU) running time. By comparing the proposed designs with existing designs that maximize SNR, we verify that the SER minimization with finite alphabet input could not be replaced by SNR maximization, especially in the high SNR regime. Although the proposed designs are with higher computational complexity than existing designs, they can achieve much better performance in practical communication systems with finite alphabet input. Moreover, numerical comparison with a full-duplex Amplify-and-Forward  (AF) relay system is conducted to show that the RIS-assisted system can achieve the same or even better performance by adopting reflecting elements rather than additional power amplifiers at relays.
\end{itemize}

\subsection{Organization}
The remainder of the paper is organized as follows. Section II describes the system model and formulates the system optimization problem. Section III introduces the proposed reflecting and precoding designs. In Section IV, we analyze the computational complexity. Numerical comparisons are presented in Section V and conclusions are drawn in Section VI.

\subsection{Notations}
In this paper, $x$ denotes a scalar; $\textbf{x}$ represents a vector; $\textbf{X}$ stands for a matrix. ${\left\| \textbf{x} \right\|_2 }$, ${\left\| \textbf{x} \right\|_p }$ and ${\left\| \textbf{x} \right\|_\infty }$ represents $l_2$ norm, $l_p$ norm and $l_\infty$ norm of $\textbf{x}$ respectively. ${\left\| \textbf{X} \right\|_F }$ is the Frobenius norm of $\textbf{X}$.  $\text{diag}\left(\textbf{x}\right)$ is a diagonal matrix whose diagonal entries are from vector $\textbf{x}$. $x_i$ denots the $i$th entry of $\textbf{x}$, $x_{i,j}$ is the element in $i$-th row and $j$-th column of a matrix $\textbf{X}$, $\textbf{x}_i$ denots the $i$-th column of matrix $\textbf{X}$. $\text{tr}\left(\textbf{X}\right)$ denotes the trace of $\textbf{X}$, $\text{rank}\left(\textbf{X}\right)$ represents the rank of $\textbf{X}$, $\text{vec}\left(\textbf{X}\right)$ means the vectorization of matrix $\textbf{X}$, which is a linear transformation which converts $\textbf{X}$ into a column vector, $\lambda_{\max} \left(\textbf{X}\right)$ is the maximum eigenvalue of matrix $\textbf{X}$. $\textbf{X} \ge 0$ means that matrix $\textbf{X}$ is positive semidefinite. $\odot$ stands for the Hadamard product, $\otimes$ denotes Kronecker product. $\left(\cdot\right)^H$ is the conjugate transpose; $\left(\cdot\right)^C$ represents the conjugate and $\left(\cdot\right)^T$ denotes the transpose. $\mathbb{C}$ stands for the complex domain while $\mathbb{R}$ represents the real domain. $\mathcal{CN}\left(\pmb{\mu}, \pmb{\Sigma}\right)$ stands for the circularly symmetric complex Gaussian distribution with mean $\pmb{\mu}$ and covariance $\pmb{\Sigma}$. $E\left[ \cdot\right]$ represents the expectation operation. $\textbf{I}_N$ denotes an $N\times N$ identity matrix. $\left[x\right]^+$ denotes $\rm{max}\left(x, 0\right)$. ${\mathop{\rm Re}\nolimits} {{\left\{ x \right\}}}$ and ${\mathop{\rm Im}\nolimits} {{\left\{ x \right\}}}$ represent the real and imaginary part of $x$, respectively. $\nabla$ denotes the gradient of a function. $Q(\cdot)$ stands for the tail distribution function of the standard normal distribution.

% $\text{angle}\left(\cdot\right)$ represents the element-wise phase function.

% no \IEEEPARstart
\section{ System Model and Problem Formulation }
\begin{figure}[t]
\centering
\includegraphics[width= 3.4in,angle=0]{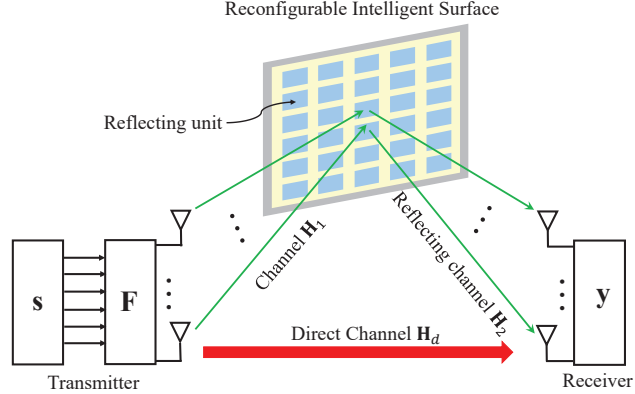}
\caption{System model.}
\label{fig_1}
\end{figure}
In this paper, we consider an RIS-assisted MIMO system model as illustrated in Fig. 1. In the model, a transmitter equipped with $N_t$ antennas communicates with a receiver equipped with $N_r$ antennas with the help of a RIS composed of $N$ reflecting units. The RIS acts as a passive relay, which is embedded on a surrounding building. 
The received vector $ \textbf{y} \in {\mathbb{C}^{N_r \times 1}}$ at the receiver can be expressed as 

\begin{align}
 \textbf{y} =  \sqrt{\rho} \left(\textbf{H}_2 \pmb{\Phi} \textbf{H}_1 +\textbf{H}_d\right)\textbf{F} \textbf{s} + \textbf{n},
\end{align}where $\rho$ is the SNR; $\textbf{H}_2 \in {\mathbb{C}^{N_r \times {N}}}$ represents the channel between the RIS and the receiver; $\pmb{\Phi} = \text{diag}\left\{ \pmb{\phi} \right\}\in {\mathbb{C}^{N \times {N}}}$ denotes the diagonal matrix accounting for the effective phase shifts applied by the RIS reflecting elements with $\pmb{\phi} {\rm{ =  }}\left[ \phi_1, \phi_2, \cdots, \phi_N \right]^T$, where $\left|\phi_{n}\right| = 1$ with $\phi_{n} \in \mathcal{F} \triangleq\left\{\exp \left(\frac{{\rm{j}} 2 \pi m}{2^{b}}\right)\right\}_{m=0}^{2^{b}-1}$ with $b$ phase resolution in number of bits \cite{CH1}; $\textbf{H}_1 \in {\mathbb{C}^{N \times {N_t}}}$ represents the channel between the transmitter and the RIS, $\textbf{H}_d \in {\mathbb{C}^{N_r \times {N_t}}}$ represents the direct channel between the transmitter and the receiver; $\textbf{F} \in {\mathbb{C}^{N_t \times {N_s}}}$ is the precoder to encode $N_s$ data streams; $\textbf{s}$ is the ${N_s \times 1}$ transmitted data symbol vector with each entry chosen from a $M$-ary constellation $\mathcal{S}_M$ and there are totally $M^{N_s}$ legitimate symbol vectors; $ \textbf{n} \sim \mathcal{CN}\left( {\textbf{0},\sigma^2{\textbf{I}_{{N_r}}}} \right)$ is the additive white Gaussian noise (AWGN) vector with each entry obeying a zero-mean variance $\sigma^2$ complex Gaussian distribution. In this paper, we assume that the average power of all legitimate symbol vectors $\{\textbf{s}\}$ is normalized. Let $\textbf{x} = \textbf{F} \textbf{s}$ denote the transmitted signal vector from the multi-antenna transmitter. It is assumed that $\textbf{x}$ satisfies the maximum average transmit power constraint $E\left[ {\left\| \textbf{x} \right\|_2^2} \right] = E\left[\text{tr}\left( {\textbf{F}\textbf{s}{\textbf{s}^H}{\textbf{F}^H}} \right)\right] = \text{tr}\left( {\textbf{Q}} \right) \le {P_{\max }}$, where $\textbf{Q}\mathop {\rm{ = }}\limits^\Delta  \textbf{F}{\textbf{F}^H}$ is the signal covariance matrix and ${P_{\max }}$ denotes the maximum average power.

It is considered that the system utilizes all $M^{N_s}$ feasible transmit vectors, the union bound on SER can thus be written as
\begin{align}\label{ob}
{P_S}\left( {\pmb{\Phi} ,\textbf{F}} \right) &= \frac{1}{{M^{N_s}}}\sum\limits_{i = 1}^{M^{N_s}} {\sum\limits_{j = 1,j \ne i}^{M^{N_s}} \Pr \left\{ {{\textbf{s}_i} \to {\textbf{s}_j}} \right\}}, 
\end{align}
where $\Pr \left\{ {{\textbf{s}_i} \to {\textbf{s}_j}} \right\}$ denotes the pairwise SER of the vector ${\textbf{s}_i}$ being erroneously detected as ${\textbf{s}_j}$. By using the squared Euclidean distance $d_{ij}^2\left( {\pmb{\Phi} ,\textbf{F}} \right) = {\left\| {\left({\textbf{H}_2}\pmb{\Phi} {\textbf{H}_1}+{\textbf{H}_d}\right) \textbf{F}\left( {{\textbf{s}_i} - {\textbf{s}_j}} \right)} \right\|^2}$ between two vectors,  $\Pr \left\{ {{\textbf{s}_i} \to {\textbf{s}_j}} \right\}$ can be computed as $\Pr \left\{ {{\textbf{s}_i} \to {\textbf{s}_j}} \right\} = {Q\left( {\sqrt {\frac{\rho{d_{ij}^2\left( {\pmb{\Phi} ,\textbf{F}} \right)}}{2\sigma^2}} } \right)}$. 

The objective of our design is to minimize the SER. To make the targeted problem more tractable, we assume that the phase resolution is infinite \cite{CH} and all involved CSI are known, the optimization can be formulated as problem $\left(\textbf{P1}\right)$:
\begin{equation}
\begin{split}
 \left(\textbf{P1}\right): \;\;\;\rm{Given}: &~\textbf{H}_1, \textbf{H}_2, \textbf{H}_d, \mathcal{S}_M \\
\rm{Find}: &~\pmb{\Phi}, \textbf{F}\\
\rm{Minimize}: &~{P_S}\left( {\pmb{\Phi} ,\textbf{F}} \right)\\
\rm{Subject\;to}: &~\text{tr}\left( {\textbf{Q}} \right) \le {P_{\max }}\\
&~\left| {{\phi _i}} \right| = 1, \forall i = 1,...,N,
\end{split}
\end{equation}
where the first constraint ensures that the BS transmit power is kept below the maximum feasible power $P_{\max}$ and the second constraint means that every reflecting unit only provides phase shift without signal amplification.

The joint optimization problem is non-convex and challenging to obtain the optimal solution due to the coupling effect directly. Thus, we resort to an alternating way to optimize the reflecting and precoding, similarly to \cite{CP1, CP,HG,QWW}.

\section{Joint Reflecting and Precoding Design}
As stated above, because of the coupling effect between the reflecting elements in $\pmb{\Phi}$ and the precoder $\textbf{F}$, the original problem $\left(\textbf{P1}\right)$ is hard to solve. To decouple them, we will first optimize $\pmb{\Phi}$ by fixing $\textbf{F}$ and then update $\textbf{F}$ by fixing $\pmb{\Phi}$ respectively. Then, we will obtain sub-optimal solutions for both $\pmb{\Phi}$ and $\textbf{F}$ by performing the process iteratively until reaching the convergence of the objective function or the solutions to an acceptable level. In the following, we first present the designs for reflecting elements.

\subsection{Reflecting Designs}
Based on a given $\textbf{F}$, the squared Euclidian distance can be re-expressed as $
d_{ij}^2\left( {\pmb{\Phi}} \right) = \left\| {{\textbf{H}_2}\pmb{\Phi} {\textbf{H}_1}{{\textbf{x}_{ij}}}} \right\|_2^2+ 2\mathcal{R}\left(\pmb{\phi}^T{\textbf{a}_{ij}}\right) +\left\| {{\textbf{H}_d}{{\textbf{x}_{ij}}}} \right\|_2^2$, where ${\textbf{x}_{ij}} = \textbf{F}\left( {{\textbf{s}_i} - {\textbf{s}_j}} \right)$ and ${\textbf{a}_{ij}}$ is a vector with $k$-th element ${\textbf{a}_{ij,k}} = {{\textbf{x}_{ij}^H}{\textbf{h}_{d,2,k}}\textbf{h}_{1,k}^T{\textbf{x}_{ij}}}$, where ${\textbf{h}_{d,2,k}}$ is the $k$-th column of ${\textbf{H}_d^H}{{\textbf{H}_2}}$ and $\textbf{h}_{1,k}$ denotes the $k$-th column of $\textbf{H}_{1}^T$. 

Then, we carry out some manipulations to transform $\left\| {{\textbf{H}_2}\pmb{\Phi} {\textbf{H}_1}{{\textbf{x}_{ij}}}} \right\|_2^2$ as
\begin{align}\label{sEd}
{\left\| {{\textbf{H}_2}\pmb{\Phi} {\textbf{H}_1}{{\textbf{x}_{ij}}}} \right\|_2^2}&={\textbf{q}_{ij}^H}\pmb{\Phi}^H{\textbf{H}_2}^H{\textbf{H}_2}\pmb{\Phi}{\textbf{q}_{ij}}= {\rm{tr}}\left(\pmb{\Phi}^H{\textbf{R}_{\textbf{H}2}}\pmb{\Phi}\Delta {\textbf{Q}_{ij}}\right),
\end{align}
where ${\textbf{q}_{ij}} = {\textbf{H}_1}{{\textbf{x}_{ij}}}$ $\textbf{R}_{\textbf{H}2} = {\textbf{H}_2}^H{\textbf{H}_2}$ and $\Delta {\textbf{Q}_{ij}} = {{\textbf{q}_{ij}}}{{\textbf{q}_{ij}^H}}$. Using the rule that $\rm{tr}\left( {\textbf{D}_\textbf{x}^H\textbf{A}{\textbf{D}_\textbf{y}}{\textbf{B}^T}} \right) = {\textbf{x}^H}\left( {\textbf{A} \odot \textbf{B}} \right)\textbf{y}$ with $\textbf{D}_\textbf{x} = \rm{diag}\left\{\textbf{x}\right\}$ and $\textbf{D}_\textbf{y} = \rm{diag}\left\{\textbf{y}\right\}$ in \cite{XZ}, we can re-express ${\left\| {{\textbf{H}_2}\pmb{\Phi} {\textbf{H}_1}{{\textbf{x}_{ij}}}} \right\|_2^2}$ as ${\left\| {{\textbf{H}_2}\pmb{\Phi} {\textbf{H}_1}{{\textbf{x}_{ij}}}} \right\|_2^2} =\pmb{\phi}^H\left(\textbf{R}_{\textbf{H}2}\odot\Delta {\textbf{Q}_{ij}^T}\right)\pmb{\phi}= \pmb{\phi}^H \textbf{C}_{ij}\pmb{\phi}$, where $\textbf{C}_{ij} = \textbf{R}_{\textbf{H}2}\odot\Delta {\textbf{Q}_{ij}^T}$. 

Finally, $d_{ij}^2\left( {\pmb{\Phi} }\right)$ can be finally expressed by the function of $\pmb{\phi}$ as
\begin{align}\label{dis}
d_{ij}^2\left( {\pmb{\phi}} \right) = \pmb{\phi}^H \textbf{C}_{ij}\pmb{\phi} + 2\mathcal{R}\left(\pmb{\phi}^T{\textbf{a}_{ij}}\right) +\left\| {{\textbf{H}_d}{{\textbf{x}_{ij}}}} \right\|_2^2.
\end{align}

\subsubsection{eMSER-Refelcting Scheme}
Expanding the matrix product, we can re-express the minimum Euclidian distance as $d_{ij}^2\left( {\pmb{\phi}} \right) = \sum\limits_{n = 1}^N\sum\limits_{k = 1}^N\phi _n{\phi _k}^H C_{ij,k,n}+2\mathcal{R}\left(\sum\limits_{k = 1}^N{\phi _k} a_{ij,k}\right)+\left\| {{\textbf{H}_d}{{\textbf{x}_{ij}}}} \right\|_2^2$,
where $C_{ij,k,n}$ is the $k$-th row and $n$-th column of matrix $\textbf{C}_{ij}$. In order to handle the non-convex problem, we use a coordinate descent method to obtain the sub-optimal solution.
 Recall $\phi_n = \exp{\left({\rm{j}}\theta_n\right)}$, the objective can be represented as a function of ${\theta_n}$ by fixing remaining phase shifts. In details, we optimize $\phi_n$ by fixing remaining phase shifts firstly until it meets the stop criterion. Then, we do the process iteratively until we get the sub-optimal solution. Thus, we formulate the gradient of the cost function $P_S\left({\theta_n}\right)$ over ${\theta_n}$ as
\begin{align}\label{theta}
\frac{\partial  P_S\left({\theta_n}\right)}{\partial \theta_n} = \frac{1}{{M}^{N_s}}\sum\limits_{i = 1}^{M^{N_s}} {\sum\limits_{j = 1,j \ne i}^{M^{N_s}} \frac{\partial{Q\left( {\sqrt {\frac{\rho {d_{ij}^2\left( \theta_n \right)}}{2\sigma^2}} } \right)}}{\partial \theta_n}}.
\end{align}

According to Leibniz's integral rule, we can express $\frac{\partial{Q\left( {\sqrt {\frac{\rho {d_{ij}^2\left( \theta_n \right)}}{2\sigma^2}} } \right)}}{\partial \theta_n}$ as
\begin{align}
&\frac{\partial{Q\left( {\sqrt {\frac{\rho {d_{ij}^2\left( \theta_n \right)}}{2\sigma^2}} } \right)}}{\partial \theta_n}=-\sqrt{\frac{\rho}{\pi\sigma^2 d_{ij}^2 \left( { \theta_n} \right)}}\exp \left( { - \frac{\rho {d_{ij}^2\left( { \theta_n} \right)}}{4\sigma^2}} \right)\frac{\partial d_{ij}^2\left( \theta_n \right)}{\partial \theta_n},
\end{align}
where $\frac{\partial d_{ij}^2\left( \theta_n \right)}{\partial \theta_n}  = 2\mathcal{R}\left(\sum\limits_{k = 1,k\neq n}^N{\rm{j}}\exp\left\{{\rm{j}}\left(\theta_n-\theta_k\right)\right\}C_{ij,k,n}\right)+2\mathcal{R}\left({\rm{j}}\exp\left\{{\rm{j}}\theta_n\right\} a_{ij,n}\right)$.

By using $-\frac{\partial  P_S\left({\theta_n}\right)}{\partial \theta_n} $ as the search direction \cite{WW} for ${\theta_n}$ and do the process iteratively, we can search the optimized solution as listed in Algorithm 1. Since we optimize $\theta_n$ directly, the optimization process will not lead to an infeasible solution. This is because for any real number $\theta_n$, $|e^{\textrm{j}\theta_n}|=1$ always holds.

\begin{algorithm}[t]
\caption{eMSER-Reflecting Scheme}
\label{10}
\begin{algorithmic}[1]
\STATE {\textbf{Initialization:} Given a feasible initial solution ${\pmb{\theta}}_0$, $k = 0$, $n = 1$, halting criterion ${\varepsilon _0}> 0$ and halting criterion ${\varepsilon _1}> 0$. }
\STATE {\textbf{Gradient and Search direction:} Compute the gradient $ \textbf{g}_k$ and derive the search direction as $w_{n,k} = -g_{n,k} = -\frac{\partial  P_S\left({\theta_{n,k}}\right)}{\partial \theta_{n,k}}$, where $-{\nabla _{{\theta_n}} }P_S\left(\theta_{n,k}\right)$ is given in \eqref{theta}. }
\STATE {\textbf{Update:} Choose Armijo backtracking line search step size $\alpha_{n,k}$ and
\begin{align}\label{it}
\theta_{n,k+1} = \theta_{n,k}+\alpha_{n,k} w_{n,k},
\end{align}
$k \leftarrow k + 1$. If $\left|w_{n,k}\right| < {\varepsilon _0}$ and $n<N$, then $n \leftarrow n +1$ and go to step 2.}
\STATE {\textbf{Iteration:} Let $n = 1$ and go to Step 2 until $P_S\left(\theta_{n,k+1}\right)- P_S\left(\theta_{n,k}\right) < {\varepsilon _1}$.}
\STATE {\textbf{Output:} The optimized reflecting elements are thus given by ${{\phi}_n}^* = {\exp}\left[{{\rm{j}}}\theta_{n,k}\right]$.
}
\end{algorithmic}
\end{algorithm}

\subsubsection{vMMSR-Reflecting Scheme}
To facilitate practical implementation, we investigate low-complexity phase shift designs in this part. Since $\left| {{\phi _i}} \right| = 1$, accordingly we can obtain that $\text{tr}\left( {\pmb{\phi}{\pmb{\phi}}^H} \right) = N$. In order to handle the non-convex constraint of $\left| {{\phi _i}} \right| = 1$, we relax the problem $\left(\textbf{P1}\right)$ into the following optimization  $\left(\textbf{P2}\right)$ with a convex ${\ell _\infty }$ constraint:
\begin{equation}
\begin{split}
\left(\textbf{P2}\right): \;\;\;\rm{Given}: &~\textbf{H}_1, \textbf{H}_2, \textbf{H}_d, \mathcal{S}_M, \textbf{F}\\
\rm{Find}: &~\pmb{\phi} \\
\rm{Minimize}: &~{P_S}\left( {\pmb{\phi}} \right)\\
\rm{Subject\;to}:&~\text{tr}\left( {\pmb{\phi}{\pmb{\phi}}^H} \right) = N\\
 &~{\left\| \pmb{\phi}  \right\|_\infty } \le 1.
\end{split}
\end{equation}

Actually, the original feasible set is a subset of the new feasible set in $\left(\textbf{P2}\right)$, i.e.,
\begin{align}
&\left\{ {\pmb{\phi} \in {\mathbb{C}^{N \times {1}}}:\left| {{\phi _i}} \right| = 1,  \forall i = 1,...,N} \right\}= \left\{ {\pmb{\phi} \in {\mathbb{C}^{N \times {1}}}:\text{tr}\left( {\pmb{\phi}{\pmb{\phi}}^H} \right) = N \& {\left\| \pmb{\phi}  \right\|_\infty } \le 1 } \right\},
\end{align}
which is convex due to the convexity of ${\ell _\infty }$ norm.

Since the ${\ell _\infty }$ constraint is non-differentiable, we exploit the ${\ell _p }$ approximation \cite{LH} with a gradually increased large $p$, $\mathop {\lim }\limits_{p \to \infty } {\left\| \pmb{\phi} \right\|_p} = {\left\| \pmb{\phi} \right\|_\infty }$, during the optimization process. To solve $\left(\textbf{P2}\right)$, we utilize the barrier method to incorporate the non-negative constraint \cite{SP} with the logarithmic barrier function $I\left( u \right)$ to approximate the penalty of violating the ${\ell _p }$ constraint, i.e., $
I\left( u \right) = \left\{ {\begin{array}{*{20}{c}}
{ - \frac{1}{t}\ln \left( u \right),}&{u > 0}\\
{\infty ,}&{u \le 0,}
\end{array}} \right.$, where $t$ is used to scale the barrier function's penalty. Thus, we can obtain the following optimization problem:
\begin{align}\label{f}
\mathop {\min }\limits_{\pmb{\phi} \in {\mathbb{C}^{N \times {1}}}}  g\left( {\pmb{\phi}, p} \right) =  {P_S}\left( \pmb{\phi}  \right) + I \left( { 1-{{\left\| \pmb{\phi}  \right\|}_p}} \right).
\end{align}
To solve \eqref{f} via a gradient method, we formulate the gradient of the cost function $g\left( {\pmb{\phi}, p} \right)$ over $\pmb{\phi}$ as follows:
\begin{align}\label{g}
{\nabla _{\pmb{\phi}} }g\left( {\pmb{\phi}, p} \right) &= {\nabla _{\pmb{\phi}} }{P_S}\left( \pmb{\phi}  \right) + \frac{{\left\| \pmb{\phi}  \right\|_p^{1 - p}{\textbf{p}_{\pmb{\phi}} }}}{2t\left( { 1-{{\left\| \pmb{\phi}  \right\|}_p}} \right)},
\end{align}
where ${\textbf{p}_{\pmb{\phi}} }\in {\mathbb{C}^{N \times {1}}}$ is given as
${\textbf{p}_{\pmb{\phi}} } = {\left[ {{\phi _1} \cdot {{\left| {{\phi _1}} \right|}^{p - 2}},{\phi _2} \cdot {{\left| {{\phi _2}} \right|}^{p - 2}},...,{\phi _N} \cdot {{\left| {{\phi _N}} \right|}^{p - 2}}} \right]^T}$.

Moreover, the gradient ${\nabla _{\pmb{\phi}} }{P_S}\left( \pmb{\phi}  \right)$ can be calculated as
\begin{align}
{\nabla _{\pmb{\phi}} }{P_S}\left( \pmb{\phi}  \right) = \frac{1}{{M}^{N_s}}\sum\limits_{i = 1}^{M^{N_s}} {\sum\limits_{j = 1,j \ne i}^{M^{N_s}} {\nabla _{\pmb{\phi}} }{Q\left( {\sqrt {\frac{\rho {d_{ij}^2\left( {\pmb{\phi}} \right)}}{2\sigma^2}} } \right)}}.
\end{align}

According Leibniz's integral rule, $
{\nabla _{\pmb{\phi}} }{Q\left( {\sqrt {\frac{\rho {d_{ij}^2\left( {\pmb{\phi}} \right)}}{2\sigma^2}} } \right)}=-\sqrt{\frac{\rho}{\pi\sigma^2 d_{ij}^2 \left( {\pmb{\phi}} \right)}}\exp \left( { - \frac{\rho {d_{ij}^2\left( {\pmb{\phi}} \right)}}{4\sigma^2}} \right)\left(\textbf{C}_{ij}\pmb{\phi}+\textbf{a}_{ij}^C\right)$. Thus, ${\nabla _{\pmb{\phi}} }g\left( {\pmb{\phi}, p} \right)$ can be re-expressed as
\begin{align}\label{gra}
&{\nabla _{\pmb{\phi}} }g\left( {\pmb{\phi}, p} \right) = -\frac{1}{{M}^{N_s}}\sum\limits_{i = 1}^{M^{N_s}} {\sum\limits_{j = 1,j \ne i}^{M^{N_s}}}\sqrt{\frac{\rho}{\pi\sigma^2 d_{ij}^2 \left( {\pmb{\phi}} \right)}} \exp \left( { - \frac{\rho{d_{ij}^2\left( {\pmb{\phi}} \right)}}{4\sigma^2}} \right)\left(\textbf{C}_{ij}\pmb{\phi}+\textbf{a}_{ij}^C\right)+\frac{{\left\| \pmb{\phi}  \right\|_p^{1 - p}{\textbf{p}_{\pmb{\phi}} }}}{2t\left( { 1-{{\left\| \pmb{\phi}  \right\|}_p}} \right)}.
\end{align}

By using $-{\nabla _{\pmb{\phi}} }g\left( {\pmb{\phi}, p} \right) $ as the search direction \cite{WW}, we can search the optimized solution as listed in Algorithm 2.

\begin{algorithm}[t]
\caption{vMSER-Reflecting Scheme}
\label{10}
\begin{algorithmic}[1]
\STATE {\textbf{Initialization:} Given a feasible initial solution ${\pmb{\phi}}_0$, $p>0$, $\Delta p>0$, $p_{\rm{max}}>0$, $k = 0$, halting criterion ${\varepsilon _2}> 0$ and ${\varepsilon _3}> 0$ and the barrier coefficient $t$. }
\STATE {\textbf{Gradient and Search direction:} Compute the gradient $ \textbf{g}_k$ and derive the search direction as $
  \textbf{w}_k = -\textbf{g}_k = -{\nabla _{\pmb{\phi}} }g\left( {\pmb{\phi}_k, p} \right)$, where ${\nabla _{\pmb{\phi}} }g\left( {\pmb{\phi}_k, p} \right)$ is given in \eqref{gra}. }
\STATE {\textbf{Direction Projection:} Project the search direction into the tangent plane of $\text{tr}\left( {\pmb{\phi}{\pmb{\phi}}^H} \right) = N$ through $\textbf{w}_k^ \bot  = {\textbf{w}_k} - \frac{{\left\langle {{\textbf{w}_k}, {\pmb{\phi}}_k} \right\rangle }{\pmb{\phi}}_k}{{\left\| {{\pmb{\phi}}_k} \right\|^2}}$. }
\STATE {\textbf{Search for ${\hat \theta }$:} For $ 0 \le {\theta } \le \pi/2$, searching for it by $
{\hat \theta }  = \mathop {\arg \min }\limits_{\theta } {P_S}\left( {\pmb{\phi}}_{k}  \right)$.}
\STATE {\textbf{Update and Normalize:} Go to step 6 if $\frac{\textbf{w}_k^ \bot}{{\left\| {\textbf{w}_k } \right\|}} \le {\varepsilon _3}$ and $\left|{P_S}\left( {\hat{\pmb{\phi}}}_{k} \right)-{P_S}\left( {\pmb{\phi}}_{k}  \right)\right| < {\varepsilon _2}$, where the $i$-th entry of $\hat{\pmb{\phi}}_{k}$ is ${\hat{\phi}}_{k,i} = \frac{{\phi}_{k,i}}{\left|{\phi}_{k,i}\right|}$, else let 
\begin{align}\label{it}
{\pmb{\phi}}_{k+1} = \cos \hat \theta  \cdot {\pmb{\phi}}_{k}+\sin \hat \theta  \cdot \sqrt{N}\frac{\textbf{w}_k^ \bot}{{\left\| {\textbf{w}_k^ \bot } \right\|}},
\end{align}
$k \leftarrow k + 1$ and then go to step 2.
}
\STATE {\textbf{Iteration:} Go to Step 7 if $p \ge p_{\rm{max}}$, else let $p \leftarrow p + \Delta p$ and then go to Step 2.}
\STATE {\textbf{Output:} The optimized reflecting elements are thus given by $
{\pmb{\phi}}^{*} = \hat{\pmb{\phi}}_k$.
}
\end{algorithmic}
\end{algorithm}

\begin{proposition}
\rm{In each iteration of Algorithm 2, ${P_S}\left( {\pmb{\phi}}_{k+1}  \right) \le {P_S}\left( {\pmb{\phi}}_{k}  \right)$}.
\end{proposition}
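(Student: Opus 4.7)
The plan is to exploit the fact that Step 4 of Algorithm 2 performs an exact line search over the scalar parameter $\theta$ on a curve that passes through the current iterate. The only real work is to verify that the parametrized path stays inside the feasible set determined by $\mathrm{tr}(\pmb{\phi}\pmb{\phi}^H)=N$ and that the starting point $\pmb{\phi}_k$ is actually reachable inside the search interval $[0,\pi/2]$; once that is in place, the monotonicity statement is a one-line consequence of the minimization.

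First I would treat the update in \eqref{it} as a function of $\theta$, writing
\begin{equation*}
\pmb{\phi}_{k+1}(\theta)=\cos\theta\cdot\pmb{\phi}_k+\sin\theta\cdot\sqrt{N}\,\frac{\textbf{w}_k^\bot}{\|\textbf{w}_k^\bot\|}.
\end{equation*}
Because Step 3 projects the search direction onto the tangent plane of the sphere $\{\pmb{\phi}:\|\pmb{\phi}\|^2=N\}$ at $\pmb{\phi}_k$, we have $\langle\pmb{\phi}_k,\textbf{w}_k^\bot\rangle=0$, so the two basis vectors in the update are orthogonal and each has squared norm $N$. A short computation then yields $\|\pmb{\phi}_{k+1}(\theta)\|^2=N\cos^2\theta+N\sin^2\theta=N$ for every $\theta$, so the whole curve lies on the feasible sphere. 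This is the key structural fact that makes the line search well defined.

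Next I would observe that $\pmb{\phi}_{k+1}(0)=\pmb{\phi}_k$, so the current iterate is itself an element of the one-parameter family being searched over, corresponding to the endpoint $\theta=0\in[0,\pi/2]$. Since $\hat\theta$ is chosen in Step 4 as the minimizer of $P_S(\pmb{\phi}_{k+1}(\theta))$ over this interval (the displayed formula in Step 4 has the obvious typo $P_S(\pmb{\phi}_k)$ in place of $P_S(\pmb{\phi}_{k+1}(\theta))$), we immediately get
\begin{equation*}
P_S(\pmb{\phi}_{k+1})=P_S(\pmb{\phi}_{k+1}(\hat\theta))\le P_S(\pmb{\phi}_{k+1}(0))=P_S(\pmb{\phi}_k),
\end{equation*}
which is exactly the claim.

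The only subtlety I expect is handling the edge case $\|\textbf{w}_k^\bot\|=0$, in which the normalized direction is undefined; in that case the gradient is already aligned with $\pmb{\phi}_k$ and the algorithm has hit the stopping criterion in Step 5, so $\pmb{\phi}_{k+1}=\pmb{\phi}_k$ and equality holds trivially. Strict decrease (as opposed to the non-strict inequality in the statement) would require the additional observation that, when $\textbf{w}_k^\bot\neq 0$, $-\textbf{w}_k^\bot$ is a descent direction for the smooth objective $P_S$ on the sphere, so a small positive $\theta$ already decreases the cost; but this refinement is not needed for the proposition as stated.
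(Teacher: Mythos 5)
Your proof is correct, but it takes a genuinely different route from the paper's. The paper argues locally: assuming $\hat\theta \to 0$, it Taylor-expands $P_S\left(\pmb{\phi}_{k+1}\right)$ around $\pmb{\phi}_k$ along the update rule and then shows the first-order term is non-positive by computing $\textbf{g}_k^H\textbf{w}_k^\bot = \left(\cos^2\alpha-1\right)\left\|\textbf{w}_k\right\|^2 \le 0$, i.e., it verifies that the projected direction is a (non-strict) descent direction so that a sufficiently small step cannot increase the objective. You instead argue globally through the exact line search in Step 4: the curve $\pmb{\phi}_{k+1}(\theta)$ stays on the sphere $\text{tr}\left(\pmb{\phi}\pmb{\phi}^H\right)=N$ because $\textbf{w}_k^\bot$ is orthogonal to $\pmb{\phi}_k$ (with $\left\|\pmb{\phi}_k\right\|^2=N$ maintained inductively from the feasible initialization), it passes through $\pmb{\phi}_k$ at $\theta=0$, and the minimizing $\hat\theta$ over $[0,\pi/2]$ therefore cannot do worse than $\theta=0$. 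Your argument buys robustness: it does not need $\hat\theta$ to be small (the paper's inequality is justified only to first order, and its expansion even omits the $\hat\theta$ scaling of the linear term), and it is insensitive to the fact that $\textbf{g}_k$ in Algorithm 2 is the gradient of the barrier-augmented objective $g\left(\pmb{\phi},p\right)$ rather than of $P_S$ itself, a mismatch the paper's descent-direction argument glosses over. What the paper's computation buys in exchange is the explicit verification that $\textbf{w}_k^\bot$ is a descent direction, which explains why a strictly positive $\hat\theta$ is typically chosen; your closing remark about strict decrease recovers this but, as you note, is not needed for the stated non-strict inequality, and your treatment of the degenerate case $\textbf{w}_k^\bot=\textbf{0}$ covers a corner the paper leaves implicit.
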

\begin{proof}
For any ${\hat \theta }\rightarrow 0$, based on \eqref{it}, the Taylor expansion of ${P_S}\left( {\pmb{\phi}}_{k+1}\right)$ can be derived as
\begin{align}\label{Ty}
{P_S}\left( {\pmb{\phi}}_{k+1}  \right) &= {P_S}\left( {\pmb{\phi}}_{k} \right)+{\textbf{g}_k^H}\cdot \sqrt{N}\frac{\textbf{w}_k^ \bot}{{\left\| {\textbf{w}_k^ \bot } \right\|}} + \mathcal{O}\left({\hat \theta }^2\right)\notag\\
&\approx {P_S}\left( {\pmb{\phi}}_{k} \right)+{\textbf{g}_k^H}{\textbf{w}_k^ \bot} \frac{\sqrt{N}}{{\left\| {\textbf{w}_k^ \bot } \right\|}}. 
\end{align}
We can also calculate that ${\textbf{g}_k^H}{\textbf{w}_k^ \bot} = \left(\cos^2\alpha-1\right){{\left\| {\textbf{w}_k } \right\|^2}} \le 0$, where $\alpha = \arccos \frac{{\pmb{\phi}}_{k}^H\textbf{w}_k}{{\left\|{\pmb{\phi}}_{k}^H {\textbf{w}_k} \right\|}}$ is the angle between vectors ${\pmb{\phi}}_{k}$ and ${\textbf{w}_k}$. Combining it with \eqref{Ty}, we have ${P_S}\left( {\pmb{\phi}}_{k+1}  \right) \le {P_S}\left( {\pmb{\phi}}_{k}  \right)$.
The proof is completed.
\end{proof}

Under this constraint relaxation, the solution obtained from each iteration may cannot guarantee satisfying $\left| {{\phi _{k,i}}} \right| = 1$. To fix this, we add a element-wise normalization step $\hat{\pmb{\phi}}_{k}$ is ${\hat{\phi}}_{k,i} = \frac{{\phi}_{k,i}}{\left|{\phi}_{k,i}\right|}$, conditioned that the objective function is not much sensitive to the normalization. This step ensures the feasibility of the solution and meanwhile keeps good performance.

\subsection{Precoding Designs}
Based on the optimized $\pmb{\Phi}$, we can simplify $\left(\textbf{P1}\right)$ as
\begin{equation}
\begin{split}
\left(\textbf{P3}\right): \;\;\;\rm{Given}: &~\textbf{H}_1, \textbf{H}_2, \textbf{H}_d, \mathcal{S}_M, \pmb{\Phi}\\
\rm{Find}: &~\textbf{F}\\
\rm{Minimize}: &~{P_S}\left( {\textbf{F}} \right)\\
\rm{Subject\;to}: &~\text{tr}\left( {\textbf{Q}} \right) \le {P_{\max }}.    
\end{split}
\end{equation}

By introducing  ${{\textbf{H}}} = {\textbf{H}_2}\pmb{\Phi}{\textbf{H}_1}+{\textbf{H}_d}  \in {\mathbb{C}^{N_r \times {N_t}}}$, the received signal can be re-expressed as
\begin{align}\label{Y}
 \textbf{y} &= {{\textbf{H}}} \textbf{F} \textbf{s} + \textbf{n}=\left[ {{{ \textbf{h}}_1},\cdots,{{ \textbf{h}}_{N_t}}} \right]\left[ {\begin{array}{*{20}{c}}
{{f_{1,1}}}&\cdots&{{f_{1,{N_s}}}}\\
\vdots&\ddots&\vdots\\
{{f_{{N_t},1}}}&\cdots&{{f_{{N_t},{N_s}}}}
\end{array}} \right]\left[ {\begin{array}{*{20}{c}}
{{s_0}}\\
{\vdots}\\
{{s_1}}
\end{array}} \right] + \textbf{n}=\sum\limits_{a = 1}^{{N_t}} {\sum\limits_{b = 1}^{{N_s}} {{f_{a,b}}{{ \textbf{h}}_{a}}{s_b}} } + \textbf{n}.
\end{align}

Considering that \eqref{Y} has two summations, which are not easy to handle, we rebuild the channel matrix, precoding matrix and the transmitted data stream in a new form \cite{PC}. We first construct a new channel matrix as ${{\bm{\hat} \textbf{H}}} = \left[ {{\bm{\hat} \textbf{H}}_1,\cdots,{\bm{\hat} \textbf{H}}_{n_t},\cdots,{\bm{\hat} \textbf{H}}_{N_t}} \right] \in {\mathbb{C}^{N_r \times {N_tN_s}}}$, in which $
{\bm{\hat} \textbf{H}}_{n_t} = \underbrace {\left[ {{{ \textbf{h}}_{n_t}},\cdots,{{ \textbf{h}}_{n_t}},\cdots,{{ \textbf{h}}_{n_t}}} \right]}_{{N_s}}
$, where each ${{\textbf{h}}_{n_t}}$ repeats $N_s$ times. Meanwhile, the precoding matrix entries in $\textbf{F}$ are collected together as $
{{\bm{\hat} \textbf{F}}} = \text{diag}\left\{\text{vec}\left(\textbf{F}^T\right)\right\}=\text{diag}\left\{\left[f_{1,1}, f_{1,2},\cdots,f_{N_t, N_s}\right]^T\right\}=\text{diag}\left\{\textbf{f}\right\}$. Furthermore,  ${{\bm{\hat} \textbf{s}}} = {\underbrace {\left[ {{{\textbf{s}}},\cdots,{{\textbf{s}}},\cdots,{{\textbf{s}}}} \right]}_{{N_t}}}^T\in {\mathbb{C}^{N_tN_s\times {1}}}$.

Following the procedure above, we can rewrite \eqref{Y} to be $\textbf{y} = {{\bm{\hat} \textbf{H}}}{{\bm{\hat} \textbf{F}}}{{\bm{\hat} \textbf{s}}} + \textbf{n}$. Thus, we can re-express the squared Euclidean distance as $d_{ij}^2\left( {\bm{\hat} \textbf{F}} \right) = {\left\| {{{\bm{\hat} \textbf{H}}} {\bm{\hat} \textbf{F}}\left( {{{\bm{\hat} \textbf{s}}_i} - {{\bm{\hat} \textbf{s}}_j}} \right)} \right\|^2}$. Similarly to \eqref{sEd}, the squared Euclidean distance can be re-expressed as
\begin{align}\label{df}
d_{ij}^2\left( {\bm{\hat} \textbf{F}} \right) &=\left({{{\bm{\hat} \textbf{s}}_i} - {{\bm{\hat} \textbf{s}}_j}}\right)^H{\bm{\hat} \textbf{F}}^H{\bm{\hat} \textbf{H}}^H{\bm{\hat} \textbf{H}}{\bm{\hat} \textbf{F}}\left({{{\bm{\hat} \textbf{s}}_i} - {{\bm{\hat} \textbf{s}}_j}}\right)= {\rm{tr}}\left({\bm{\hat} \textbf{F}}^H{\textbf{R}_{{\bm{\hat} \textbf{H}}}}{\bm{\hat} \textbf{F}}\Delta {\textbf{S}_{ij}}\right)=\textbf{f}^H\left({\textbf{R}_{{\bm{\hat} \textbf{H}}}}\odot\Delta {\textbf{S}_{ij}}\right)\textbf{f} = \textbf{f}^H {\bm{\hat} \textbf{C}}_{ij}\textbf{f},
\end{align}
where ${\textbf{R}_{{\bm{\hat} \textbf{H}}}} = {\bm{\hat} \textbf{H}}^H{\bm{\hat} \textbf{H}}$, $\Delta {{\textbf{S}}_{ij}} = \left( {{{\bm{\hat} \textbf{s}}_i} - {{\bm{\hat} \textbf{s}}_j}} \right)\left({{{\bm{\hat} \textbf{s}}_i} - {{\bm{\hat} \textbf{s}}_j}}\right)^H$, $\textbf{f} = \left[f_{1,1}, f_{1,2},\cdots,f_{N_t, N_s}\right]^T$ and ${\bm{\hat} \textbf{C}}_{ij} = {\textbf{R}_{{\bm{\hat} \textbf{H}}}}\odot\Delta {\textbf{S}_{ij}}$.

In the following, we will provide several algorithms to solve this problem.

\subsubsection{MMSE-Precoding Scheme}
 Based on discussion above, $\left(\textbf{P3}\right)$ can be transformed to 
\begin{equation}
\begin{split}
\left(\textbf{P3-\textbf{a}}\right): \;\;\;\rm{Given}: &~{\bm{\hat} \textbf{H}}, \mathcal{S}_M\\
\rm{Find}: &~\textbf{f}\\
\rm{Minimize}: &~{P_S}\left( {\textbf{f}} \right)\\
\rm{Subject\;to}: &~\text{tr}\left( {\textbf{f}\textbf{f}^H} \right) \le {P_{\max }}.
\end{split}
\end{equation}

To solve $\left(\textbf{P3-\textbf{a}}\right)$, we formulate the Lagrangian function as
\begin{align}
 L\left( {\textbf{f}, \mu} \right) =  {P_S}\left( \textbf{f} \right) + \mu\left(\text{tr}\left( {\textbf{f}\textbf{f}^H} \right) - {P_{\max }}\right),
\end{align}
where $\mu$ is the Lagrangian multiplier. The optimal solution must satisfy the Karush-Kuhn-Tucker (KKT) conditions as
\begin{align}\label{KKT}
\left\{ \begin{array}{l}
{\nabla _{\textbf{f}} } L\left( {\textbf{f}, \mu} \right) = 0\\
\mu\left(\text{tr}\left( {\textbf{f}\textbf{f}^H} \right) - {P_{\max }}\right) = 0\\
\mu \ge 0
\end{array} \right..
\end{align}
Because of its monotonicity with power, $ L\left( {\textbf{f}, \mu} \right)$ is minimized when the power constraint is met with strict equality. Hence, $\mu\left(\text{tr}\left( {\textbf{f}\textbf{f}^H} \right) - {P_{\max }}\right) = 0$. The first equation in \eqref{KKT} can be represented as
\begin{align}\label{LL}
{\nabla _{\textbf{f}} } L\left( {\textbf{f}, \mu} \right) &= \left[-\frac{1}{{M^{N_s}}}\pmb{\Omega}\left(\textbf{f}\right)+2\mu\textbf{I}\right]{\textbf{f}} = 0,
\end{align}
where $\pmb{\Omega}\left(\textbf{f}\right)$ is given by
\begin{align}\label{ff}
\pmb{\Omega}\left(\textbf{f}\right) = \sum\limits_{i = 1}^{M^{N_s}} {\sum\limits_{j = 1,j \ne i}^{M^{N_s}}}{ {\sqrt {\frac{\rho}{4\pi\sigma^2 {d_{ij}^2\left( {\textbf{f}} \right)}}} }}\exp \left( { - \frac{\rho{d_{ij}^2\left( {\textbf{f}} \right)}}{4\sigma^2}} \right){\bm{\hat} \textbf{C}}_{ij}.
\end{align}
Clearly, the closed-form solution of \eqref{LL} is difficult to derive. In this paper, we use the gradient method and project the solution onto the tangent plane of $\text{tr}\left( {\textbf{f}\textbf{f}^H} \right) = {P_{\max }}$. Specifically, the gradient descent direction of \eqref{ff} is firstly calculated to be
\begin{align}\label{sd}
  \textbf{r}_k = -{\bm{\hat} \textbf{g}}_{k} =\pmb{\Omega}\left(\textbf{f}_k\right)\textbf{f}_k.
\end{align}
Then, a projection on search direction is conducted by
\begin{align}\label{sdp}
\textbf{r}_k^ \bot  = {\textbf{r}_k} - \frac{{\left\langle {{\textbf{r}_k}, \textbf{f}_k} \right\rangle }\textbf{f}_k}{{\left\| {\textbf{f}_k} \right\|^2}}.
\end{align} 
We update the solution by searching along the projected search direction as
\begin{align}\label{fn}
\textbf{f}_{k+1} = \cos \hat \beta  \cdot {\textbf{f}}_{k}+\sin \hat \beta \cdot \sqrt{{P_{\max }}}\frac{\textbf{r}_k^ \bot}{{\left\| {\textbf{r}_k^ \bot } \right\|}},
\end{align}
where $\hat \beta $, $ \left(0 \le {\hat \beta } \le \pi/2\right)$ can be obtained by 
\begin{align}\label{beta}
{\hat \beta}  = \mathop {\arg \min }\limits_{\hat \beta } {P_S}\left( \textbf{f}_{k}  \right).
\end{align}
By doing this process iteratively until it reaches the stop criterion $\frac{\textbf{r}_k^ \bot}{{\left\| {\textbf{r}_k } \right\|}} \le {\varepsilon _4}$, where ${\varepsilon _4}$ is the halting criterion, a good solution can be obtained. For clearness, we list the iterative algorithm to search for the solution of \eqref{LL} in Algorithm 3.
\begin{algorithm}[t]
\caption{MSER-Precoding Scheme}
\begin{algorithmic}[1]
\STATE {\textbf{Initialization:} Given a feasible initial solution $\textbf{f}_0$, $k = 0$ and halting criterion ${\varepsilon _4}> 0$. }
\STATE {\textbf{Gradient and Search direction:} Compute the gradient $ \textbf{g}_k$ and derive the search direction $\textbf{r}_k$ through \eqref{sd} . }
\STATE {\textbf{Direction Projection:} Project the search direction into the tangent plane of $\text{tr}\left( {\textbf{f}\textbf{f}^H} \right) = {P_{\max }}$ through \eqref{sdp}. }
\STATE {\textbf{Search for ${\hat \beta }$:} For $ 0 \le {\hat \beta } \le \pi/2$, searching for it by \eqref{beta}.}
\STATE {\textbf{Update:} Go to step 6 if $\frac{\textbf{r}_k^ \bot}{{\left\| {\textbf{r}_k } \right\|}} \le {\varepsilon _4}$, else update the solution as \eqref{fn} and let $k \leftarrow k + 1$, and then go to step 2.
}
\STATE {\textbf{Output:} The optimized precoding matrix are thus given by:
\begin{align}
\textbf{F}^* = \left[ {\begin{array}{*{20}{c}}
{{f_{k1}}}&\cdots&{{f_{k{N_s}}}}\\
\vdots&\ddots&\vdots\\
{{f_{k\left( {{N_t} - 1} \right){N_s}}}}&\cdots&{{f_{k{N_t}{N_s}}}}
\end{array}} \right],
\end{align}
where ${{f_{ki}}}$ is the $i$-th elements of $\textbf{f}_k$.}
\end{algorithmic}
\end{algorithm} 

\begin{proposition}
\rm{In Algorithm 3, ${P_S}\left( \textbf{f}_{k+1}  \right) \le {P_S}\left( {\textbf{f}}_{k}  \right)$}.
\end{proposition}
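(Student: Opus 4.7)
The plan is to mirror the Taylor-expansion argument used for Proposition 1. Since the line search \eqref{beta} selects $\hat\beta$ as a minimizer of $P_S(\textbf{f}_{k+1})$ over $[0,\pi/2]$ and $\textbf{f}_{k+1}=\textbf{f}_k$ at $\hat\beta=0$, it is enough to exhibit some $\hat\beta\in(0,\pi/2]$ along which the objective does not increase; the chosen optimizer can then only do at least as well. I will show that the directional derivative of $P_S$ along the curve \eqref{fn} at $\hat\beta=0$ is non-positive, so every sufficiently small $\hat\beta>0$ is admissible.

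Concretely, using \eqref{fn} I would write $\textbf{f}_{k+1}-\textbf{f}_k=(\cos\hat\beta-1)\textbf{f}_k+\sin\hat\beta\,\sqrt{P_{\max}}\,\textbf{r}_k^\bot/\|\textbf{r}_k^\bot\|$. Since $\cos\hat\beta-1=O(\hat\beta^2)$ and $\sin\hat\beta=\hat\beta+O(\hat\beta^3)$, the first-order Taylor expansion of $P_S$ about $\textbf{f}_k$ reads $P_S(\textbf{f}_{k+1})=P_S(\textbf{f}_k)+\frac{\hat\beta\sqrt{P_{\max}}}{\|\textbf{r}_k^\bot\|}\,\hat{\textbf{g}}_k^H\textbf{r}_k^\bot+O(\hat\beta^2)$. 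The sign of the linear coefficient is then obtained exactly as in the proof of Proposition 1: substituting $\textbf{r}_k=-\hat{\textbf{g}}_k$ from \eqref{sd} into the projection \eqref{sdp} and simplifying yields $\hat{\textbf{g}}_k^H\textbf{r}_k^\bot=-\|\hat{\textbf{g}}_k\|^2+|\textbf{f}_k^H\hat{\textbf{g}}_k|^2/\|\textbf{f}_k\|^2=(\cos^2\alpha-1)\|\hat{\textbf{g}}_k\|^2\le 0$, where $\alpha$ is the angle between $\textbf{f}_k$ and $\hat{\textbf{g}}_k$ and the inequality is Cauchy--Schwarz. The first-order term is therefore non-positive, so $P_S(\textbf{f}_{k+1})\le P_S(\textbf{f}_k)$ for every sufficiently small $\hat\beta>0$; the line search then preserves, and generally strengthens, this inequality.

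The main delicacy I expect is interpreting the derivative of the real-valued $P_S$ with respect to a complex vector $\textbf{f}$. Strictly speaking the linear term must be read in the Wirtinger sense, with $\hat{\textbf{g}}_k$ the conjugate Wirtinger gradient used to assemble \eqref{ff} and \eqref{sd}. Because the projection \eqref{sdp} and the scalar $\hat{\textbf{g}}_k^H\textbf{r}_k^\bot$ are built under that same convention, the resulting quantity is automatically real and the Cauchy--Schwarz step above is legitimate; this is precisely the treatment already used implicitly in Proposition 1, so the proof closes in the same way without requiring anything new beyond the previous algorithm's analysis.
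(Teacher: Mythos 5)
Your proposal is correct and follows essentially the same route as the paper, which proves this proposition by invoking the Taylor-expansion argument of Proposition 1 (first-order term ${\textbf{g}}_k^H\textbf{r}_k^\bot=(\cos^2\alpha-1)\|\textbf{g}_k\|^2\le 0$ after projection). Your additional observations—that the line search over $\hat\beta\in[0,\pi/2]$ can only improve on small $\hat\beta$ since $\hat\beta=0$ recovers $\textbf{f}_k$, and that the gradient must be read in the Wirtinger sense—only make the paper's implicit reasoning explicit.
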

\begin{proof}
It can be proved similarly to \textbf{Proposition 1}. 
\end{proof}

Since $\left(\textbf{P3-a}\right)$ is a non-convex problem, \eqref{LL} is a necessary condition for global optimum and the generated vector $\textbf{f}$ in Algorithm 3 is thus a critical point. This algorithm guarantee the feasibility for all solutions, because
\begin{align}
\textrm{tr}(\textbf{f}_{k+1}\textbf{f}_{k+1}^H)&=\textbf{f}_{k+1}^H\textbf{f}_{k+1}=\cos \hat \beta ^2\textbf{f}_{k}^H\textbf{f}_{k}+\sin \hat \beta^2P_{\max }+2\cos \hat \beta\sin \hat \beta\sqrt{{P_{\max }}}\textbf{f}_{k}^H \frac{\textbf{r}_k^ \bot}{{\left\| {\textbf{r}_k^ \bot } \right\|}}\notag \\
&= \cos \hat \beta ^2P_{\max }+\sin \hat \beta^2P_{\max }=P_{\max}.
\end{align}

\subsubsection{MMED-Precoding Scheme}In the high SNR regime, the SER for a given channel can be simplified as ${P_{S}}\left( {\pmb{\Phi} ,\textbf{F}} \right) \approx \frac{\lambda}{{M^{N_s}}}  {Q\left( {\sqrt {\frac{\rho{d_{\rm{min}}^2\left( {\pmb{\Phi} ,\textbf{F}} \right)}}{2\sigma^2}} } \right)}$, where $\lambda$ is the number of closest symbol pairs, and ${d_{\rm{min}}^2\left( {\pmb{\Phi} ,\textbf{F}} \right)= {\rm{min}}_{\forall {i,j,i \ne j}}d_{ij}^2\left( {\pmb{\Phi} ,\textbf{F}} \right)}$ is the minimum squared Euclidean distance between the noise-free received signal vectors. As 
${P_{S}}\left( {\pmb{\Phi} ,\textbf{F}} \right)$ is a monotonically decreasing function of ${d_{\rm{min}}^2\left( {\pmb{\Phi} ,\textbf{F}} \right)}$, problem $\left(\textbf{P1}\right)$ can be formulated as a maximizing the minimum Euclidian distance (MMED) problem \cite{shuai}.
% \begin{equation}
% \begin{split}
% \left(\textbf{P4}\right): \;\;\;\rm{Given}: &~\textbf{H}_1, \textbf{H}_2, \mathcal{S}_M \\
% \rm{Find}: &~\pmb{\Phi}, \textbf{F}\\
% \rm{Maximize}: &~{d_{\rm{min}}^2\left( {\pmb{\Phi} ,\textbf{F}} \right)}\\
% \rm{Subject\;to}: &~\text{tr}\left( {\textbf{Q}} \right) \le {P_{\max }}\\
% &\left| {{\phi _i}} \right| = 1, \forall i = 1,...,N.
% \end{split}
% \end{equation}
Based on given $\pmb{\Phi}$ and \eqref{df}, $\left(\textbf{P1}\right)$ can be rewritten as 
\begin{equation}
\begin{split}
\left(\textbf{P4}\right): \;\;\;\rm{Given}: &~\textbf{H}_1, \textbf{H}_2, \textbf{H}_d, \mathcal{S}_M, {\pmb{\Phi}} \\
\rm{Find}: &~\textbf{f}\\
\rm{Maximize}: &~{\rm{min}}~ \textbf{f}^H {\bm{\hat} \textbf{C}}_{ij}\textbf{f}\\
\rm{Subject\;to}: &~\text{tr}\left( {\textbf{Q}} \right) \le {P_{\max }}.
\end{split}
\end{equation}

By introducing an auxiliary variable $r$, we have the equivalent epigraph of $\left(\textbf{P4}\right)$ as
\begin{equation}
\begin{split}
\left(\textbf{P4-a}\right): \;\;\;\rm{Given}: &~{\textbf{H}}, \mathcal{S}_M\\ 
\rm{Find}: &~\textbf{f}\\
\rm{Maximize}: &~r\\
\rm{Subject\;to}: &~\textbf{f}^H {\bm{\hat} \textbf{C}}_{ij}\textbf{f} \ge r\; {\forall {i,j,i \ne j}}\\
&~\text{tr}\left( {\textbf{f}\textbf{f}^H} \right) \le {P_{\max }}.
\end{split}
\end{equation}

However, we can review $\left(\textbf{P4-a}\right)$ in another equivalent way \cite{PC} as
\begin{equation}
\begin{split}
\left(\textbf{P4-b}\right): \;\;\;\rm{Given}: &{\textbf{H}}, \mathcal{S}_M\\ 
\rm{Find}: &~\textbf{f}\\
\rm{Minimize}: &~{\left\| \textbf{f}  \right\|^2 } \\
\rm{Subject\;to}: &~\textbf{f}^H {\bm{\hat} \textbf{C}}_{ij}\textbf{f} \ge d_{\rm{min}}\; {\forall {i,j,i \ne j}},
\end{split}
\end{equation}
where $d_{\rm{min}}$ is the desired squared minimum distance. The rationale behind $\left(\textbf{P4-b}\right)$ is to guarantee the minimum squared distance, while pursuing the minimum power usage as the objective. It can be seen that $\left(\textbf{P4-b}\right)$ is a large-scale non-convex QCQP problem. Since this optimization problem is similar to the problem investigated in \cite{PC}, we use the same method proposed in \cite{PC} to solve the problem. The details can be seen in \cite{PC}. It should be noted that the power constraint is first released in $\left(\textbf{P4-b}\right)$. After the optimization through MMED-Precoding, the obtained solution is scaled to satisfy the power constraint to ensure feasibility.

\subsection{Alternating Optimization}
Based on the aforementioned reflecting schemes and precoding schemes, an alternating algorithm can be conducted to minimize SER as listed in Algorithm 4. 

\begin{algorithm}[t]
\caption{Alternating Optimization}
\begin{algorithmic}[1]
\STATE {\textbf{Initialization:} Given a feasible initial solution ${\pmb{\phi}}_0$, $\textbf{f}_0$, $k = 0$ and stopping criteria ${\varepsilon_T}> 0$}
\STATE {\textbf{Optimize the reflecting elements:} Based on $\textbf{f}_k$, optimize the reflecting elements via reflecting schemes, which yields ${\pmb{\phi}}_{k+1}$.}
\STATE {\textbf{Optimize the precoder:} Based on ${\pmb{\phi}}_{k+1}$, optimize the precoder via precoding schemes, which yields $\textbf{f}_{k+1}$.}
\STATE {\textbf{Iteration:} Let $k \leftarrow k + 1$. Go to step 2 until ${P_S}\left( {\pmb{\phi}}_{k}, \textbf{f}_k  \right) -{P_S}\left( {\pmb{\phi}}_{k+1}, \textbf{f}_{k+1}  \right) < {\varepsilon_T}$.}
\end{algorithmic}
\end{algorithm}

\begin{proposition}
\rm{In Algorithm 4, ${P_S}\left({\pmb{\phi}}_{k+1}, \textbf{f}_{k+1}  \right) \le {P_S}\left({\pmb{\phi}}_{k}, {\textbf{f}}_{k}  \right)$}.
\end{proposition}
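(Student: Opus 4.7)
The plan is to reduce Proposition 3 to the two monotonicity results already established for the subproblems, namely Propositions 1 and 2. Since Algorithm 4 simply alternates between an inner reflecting optimization (Algorithm 1 or 2) and an inner precoding optimization (Algorithm 3), the overall objective decrease should be broken down into the two half-iterations, each of which is governed by a previously proved monotonicity statement.

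Concretely, I would introduce the intermediate point $({\pmb{\phi}}_{k+1}, \textbf{f}_k)$ produced after Step 2 of Algorithm 4 and split the desired inequality as
\begin{align}
{P_S}\left({\pmb{\phi}}_{k+1}, \textbf{f}_{k+1}\right)
&\le {P_S}\left({\pmb{\phi}}_{k+1}, \textbf{f}_{k}\right) \le {P_S}\left({\pmb{\phi}}_{k}, \textbf{f}_{k}\right).
\end{align}
The right inequality is exactly the statement of Proposition 1 applied to the reflecting subproblem $(\textbf{P2})$ with $\textbf{f}=\textbf{f}_k$ held fixed: the vMSER-Reflecting update (or, analogously, the coordinate-descent update in Algorithm 1, whose Armijo line search guarantees a non-increasing objective) produces a new $\pmb{\phi}_{k+1}$ satisfying $P_S(\pmb{\phi}_{k+1})\le P_S(\pmb{\phi}_{k})$ for this fixed precoder. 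The left inequality is the statement of Proposition 2 applied to the precoding subproblem $(\textbf{P3\text{-}a})$ with $\pmb{\phi}=\pmb{\phi}_{k+1}$ held fixed: Algorithm 3 yields $\textbf{f}_{k+1}$ with $P_S(\textbf{f}_{k+1})\le P_S(\textbf{f}_k)$.

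A small care point is that Propositions 1 and 2 are stated only with respect to the variable being optimized in the corresponding subproblem, while Proposition 3 is a joint statement. I would therefore emphasize that ${P_S}(\pmb{\phi},\textbf{f})$ is one and the same cost function and that fixing one argument simply restricts it; the half-iteration monotonicities then lift to the joint cost without any additional argument. Telescoping the two inequalities above gives the claim ${P_S}(\pmb{\phi}_{k+1},\textbf{f}_{k+1})\le {P_S}(\pmb{\phi}_k,\textbf{f}_k)$, and completes the proof.

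The only real obstacle I anticipate is the case of MMED-Precoding rather than MSER-Precoding: that variant optimizes a high-SNR surrogate (the minimum squared Euclidean distance), so monotonicity of $P_S$ per se is not automatic and would need to be either (i) justified by the high-SNR equivalence ${P_S}\approx \frac{\lambda}{M^{N_s}}Q(\sqrt{\rho d_{\min}^2/(2\sigma^2)})$ together with the final rescaling to meet the power constraint, or (ii) side-stepped by stating Proposition 3 specifically for the pairing that uses Algorithm 3. I would note this explicitly so that the statement is unambiguous, and then conclude as above.
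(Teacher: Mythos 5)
Your proposal is correct and follows essentially the same route as the paper: the paper's proof also introduces the intermediate point $\left({\pmb{\phi}}_{k+1}, \textbf{f}_{k}\right)$, invokes Proposition 1 to get ${P_S}\left({\pmb{\phi}}_{k+1}, \textbf{f}_{k}\right) \le {P_S}\left({\pmb{\phi}}_{k}, \textbf{f}_{k}\right)$ and Proposition 2 to get ${P_S}\left({\pmb{\phi}}_{k+1}, \textbf{f}_{k+1}\right) \le {P_S}\left({\pmb{\phi}}_{k+1}, \textbf{f}_{k}\right)$, and chains the two inequalities. Your additional caveat about the MMED-Precoding variant is a fair observation, but it does not change the argument, which matches the paper's.
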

\begin{proof}
From \textbf{Proposition 1}, we can obtain that 
${P_S}\left({\pmb{\phi}}_{k+1}, \textbf{f}_{k}  \right) \le {P_S}\left({\pmb{\phi}}_{k}, {\textbf{f}}_{k}  \right)$ and from \textbf{Proposition 2}, we can further obtain that ${P_S}\left({\pmb{\phi}}_{k+1}, \textbf{f}_{k+1}  \right) \le {P_S}\left({\pmb{\phi}}_{k+1}, {\textbf{f}}_{k}  \right)$. Thus, in Algorithm 4, it yields
\begin{align}
{P_S}\left({\pmb{\phi}}_{k+1}, \textbf{f}_{k+1}  \right) \le {P_S}\left({\pmb{\phi}}_{k+1}, \textbf{f}_{k}  \right) \le {P_S}\left({\pmb{\phi}}_{k}, {\textbf{f}}_{k}  \right).
\end{align}
\end{proof}

 The proposition indicates that ${P_S}\left({\pmb{\phi}}_{k}, {\textbf{f}}_{k}  \right)$ is a monotone decreasing sequence as $k$ increases. As is well known, the sequence is bounded by a lower limit ${P_S}\left({\pmb{\phi}}^*, {\textbf{f}}^*  \right) \ge 0$, where $\left({\pmb{\phi}}^*, {\textbf{f}}^*  \right)$ represents the global optimal solution. According to the monotone convergence theorem, that is, if a sequence of real numbers is decreasing and bounded below, then its infimum is the limit, we can prove that our proposed alternating optimization algorithms are always converged. To observe the convergence speed, we have included a figure showing the number of iterations required for the algorithms in the simulation part. 

\begin{remark}
\rm{In this section, we split the optimization problem $\left(\textbf{P1}\right)$ into two separate optimization problem. The approach of iteratively solving the phase shift matrix and precoding matrix can only provide an efficient way to reduce the SER gradually. Due to the non-convexity of the problem, the approach only ensures a sub-optimal solution.  Moreover, the solution obtained through Alternating Optimization is also always feasible since the proposed precoding and reflecting optimization schemes are always feasible. }
\end{remark}

\section{Computational Complexity}
In this section, we analyze the computational complexity of the proposed algorithms.

\subsection{Computational Complexity of Reflecting Schemes}

\subsubsection{Complexity Order of eMSER-Reflecting} As can be seen from Algorithm 1, the computational complexity is dominated by the gradient calculation, which involves i) calculating the $M^{N_s}\left(M^{N_s}-1\right)$ matrix multiplications $\pmb{\phi}^H \textbf{C}_{ij}\pmb{\phi}$ for each angle optimization and ii) searching for the best solution for every angle based on other given angle, which is $N$ operations. Firstly, we should obtain the computational complexity of $\textbf{C}_{ij}$ computation, which is $\mathcal{C}_{C} =  \mathcal{O}\left[ N^2+N^2N_r+N_tN_s+NN_t+N^2 \right]$, where $N^2$ is the computation complexity for $\textbf{C}_{ij} = \textbf{R}_{\textbf{H}2}\odot\Delta {\textbf{Q}_{ij}^T}$ calculation, $N^2N_r$ is the computation complexity for $\textbf{R}_{\textbf{H}2}$ calculation and $N_tN_s+NN_t+N^2$ is the computation complexity for the calculation of $\Delta {\textbf{Q}_{ij}^T}$. It can be further simplified as $\mathcal{C}_{C} = \mathcal{O}\left[N^2N_r+N_tN_s+NN_t \right]$. Therefore, the complexity order of Algorithm 1 is :
\begin{equation}
\mathcal{C}_{\rm{eM}} = \mathcal{O}\left[ NM^{2N_s}\left(N^2N_r+N_tN_s+NN_t\right)\right].
\end{equation}
\subsubsection{Complexity of vMSER-Reflecting} As can be seen from Algorithm 2, the computational complexity is mainly consumed by the gradient calculation, which involves i) calculating the $M^{N_s}\left(M^{N_s}-1\right)$ matrix multiplications $\pmb{\phi}^H \textbf{C}_{ij}\pmb{\phi}$, and ii) calculating the ${\ell _p }$ norm. Therefore, the complexity order of vMSER-Reflecting Algorithm is :
\begin{align}
\mathcal{C}_{\rm{vM}}=\mathcal{O}\left[ M^{2N_s}\left(N^2N_r+N_tN_s+NN_t\right) + N\sum\limits_{n = 1}^{{N_p}} {{p^{\left( n \right)}}} \right],
\end{align}
where $N_p = \left(p_{\rm{max}}-p\right)/\Delta p$, $p^{\left( n \right)} = p + \left(n-1\right)\Delta p$, and $p$, $\Delta p$, $p_{\rm{max}}$ are specified in step 1 in Algorithm 2.

\subsection{Computational Complexity of Precoding Schemes}

\subsubsection{Complexity of MSER-Precoding} Similar to Algorithm 2, the complexity of Algorithm 3 is also mainly consumed by the gradient calculation involving matrices $\textbf{f}^H {\bm{\hat} \textbf{C}}_{ij}\textbf{f}$, while $\textbf{C}_{ij}$ needs $\mathcal{O}\left[N^2N_r\right.$ $\left.+N_rNN_t + N_rN_t^2N_s^2+N_t^2N_s^2\right] = \mathcal{O}\left[N^2N_r+N_rNN_t+ N_rN_t^2N_s^2\right]$, where $N^2N_r+N_rNN_t$ is the computation complexity of $\textbf{H} = \textbf{H}_2\pmb{\phi}\textbf{H}_1$, $N_rN_t^2N_s^2$ is the complexity for computing ${\textbf{R}_{{\bm{\hat} \textbf{H}}}} = {\bm{\hat} \textbf{H}}^H{\bm{\hat} \textbf{H}}$ and $N_t^2N_s^2$ is the complexity of $\Delta {\textbf{S}_{ij}} = \left( {{{\bm{\hat} \textbf{s}}_i} - {{\bm{\hat} \textbf{s}}_j}} \right)\left({{{\bm{\hat} \textbf{s}}_i} - {{\bm{\hat} \textbf{s}}_j}}\right)^H$. Therefore, the complexity order of Algorithm 3 for each iteration is :
\begin{equation}
\mathcal{C}_{\rm{MS}} = \mathcal{O}\left[ M^{2N_s}\left(N^2N_r+N_rNN_t+ N_rN_t^2N_s^2\right)\right].
\end{equation}

\subsubsection{Complexity of MMED-Precoding Scheme} 
According to \cite{PC}, the computational complexity order of MMED-Precoding $C_{\rm{MM}}$ is also mainly consumed by the gradient calculation involving matrices $\textbf{f}^H {\bm{\hat} \textbf{C}}_{ij}\textbf{f}$, which is same as Algorithm 2. 
\begin{remark}
\rm{It should be mentioned that MMED-Precoding does not need SNR information. In other words, MMED-Precoding is applicable for any SNR values, while MSER-Precoding needs to be re-designed as SNR varies. Besides, the number of iterations required for MMED-Precoding scheme is much less owing to the use of the BFGS method \cite{PC}. Thus, in general, the complexity of the MMED-Precoding scheme is much less than the MMSR-Precoding scheme.}
\end{remark}

% \subsubsection{Complexity of MRT-Precoding Scheme} 
% For the MRT-Precoding scheme, the computational complexity can be easily analyzed to be $\mathcal{O}\left[N^2N_r+N_rNN_t+N_tN_s\right]$.

\subsection{Complexity of Alternating Optimization} 
In this subsection, we analyze the overall complexity order of Alternating Optimization by taking different combinations of precoding schemes and reflecting schemes. By process reflecting schemes and three precoding schemes alternatively, the overall complexity order can be expressed as $\mathcal{C}=\mathcal{C}_{i \in \mathbb{R}}+\mathcal{C}_{k\in \mathbb{P}}$, 
where $\mathcal{C}_{i \in \mathbb{R}}$ and $\mathcal{C}_{k\in \mathbb{P}}$ denote the complexity order of the selected reflecting scheme and precoding scheme with $\mathbb{R} = \{\rm{eM}, \rm{vM}\}$ and $\mathbb{P} = \{\rm{MS}, \rm{MM}\}$.

\section{Numerical Results}
In this section, we first compare the proposed designs with the exhaustive search (ES) solution obtained by an exhaustively searching algorithm and compare their CPU running time numerically. Then, we compare the performance of the proposed reflecting and precoding designs with existing designs which aims to maximize SNR, that is, SDP-Reflecting and Eigen-Precoding in variously configured $\left(N_r, N, N_t, N_s, M, K\right)$ systems, where $K$ is the Rician fading parameter. All experiments are performed on a PC with a 3.7GHz W-2145 CPU and 32GB RAM. By adopting Alternating Optimization with different reflecting scheme and precoding scheme combinations, vMSER-MMED, SDP-Eigen, SDP-MMED, and vMSER-Eigen are simulated and compared. Then, we demonstrate the SER performance with a large number of reflecting elements to investigate the impact of the size of the surface on the performance. Furthermore, the system performance using perfect CSI and imperfect CSI is compared to investigate the impact of channel estimation errors. Finally, we make a fair comparison between the RIS-assisted system and a full-duplex Amplify-and-Forward (AF)-relaying system.

\subsection{CPU Running Time Comparison and Superiority of the Proposed Schemes}
In Fig. \ref{fig2}, we compare the proposed Reflecting-Precoding combinations with ES solution in $(1, 2, 2, 1, 2, 5)$ system under the same channel realization. The ES solution can be obtained by searching each optimal reflecting elements from the feasible set $C_{\rm{{ESR}}}$ and searching every optimal element of the precoding matrix from the feasible set $C_{\rm{{ESP}}}$. Since the computational complexity of the ES solution exponentially increases with the number of variables $2N_tN_s+N$, thus it only can be realized in systems with small $N_t$, $N_s$ and $N$. To verify the computation efficiency, we present CPU running time comparisons, which is shown in Table I over the same channel for each algorithm. It is seen that all combinations can achieve almost optimal performance compared to the ES solution, which validates the effectiveness of the proposed algorithms. From Table I, it is observed that eMSER-combinations take much longer time than others since the eMSER-Reflecting needs more time to obtain the sub-optimal solution. However, the vMSER-combinations can achieve close performance to eMSER-combination with much lower computation complexity. Thus, vMSER-combinations would be more suitable for practical implementation. Compared with random design, we conclude that appropriate design for the RIS-assisted system can improve system performance significantly. This results from the fact that more signal power can be centered on the direction of the receiver with the well-designed precoding and reflecting. The number of iterations that the Alternating Optimization converged is one of the key factors dominating the computational complexity, which shown in Fig. \ref{fignew}. It is observed  that that Alternating Optimization adopting MMED-Precoding scheme needs more iterations than MSER-Precoding scheme. Moreover, we find that the average number of iterations required for the Alternating Optimization increases as $N$ and $N_t$ increase. This is because  the number of variables and the number of constraints in the optimization problem increase as $N$ and $N_t$ increase.

\begin{figure}[t]
\centering
\includegraphics[width= 3.4in,angle=0]{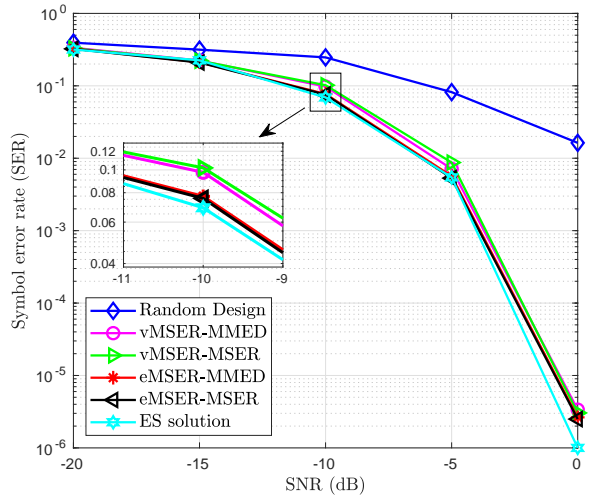}
\caption{SER comparisons among different Reflecting-Precoding combinations.}
\label{fig2}
\end{figure}

\begin{figure}[t]
\centering
\includegraphics[width= 3.4in,angle=0]{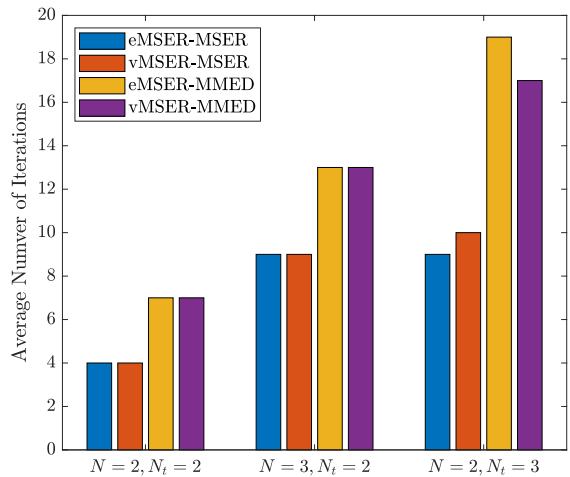}
\caption{Average number of iterations that different Reflecting-Precoding combinations takes to converge.}
\label{fignew}
\end{figure}

\begin{small}
\begin{table*}[!t]
\centering
\begin{tabular}{ P{2.5cm}||P{10cm}|P{2.5cm}  }
 \hline\hline Reflecting Schemes & Complexity Order &CPU Running Time (Seconds)\\ [0.6ex]
 \hline\hline
  eMSER-MSER  &  $\mathcal{O}\left[ M^{2N_s}\left(N^3N_r+NN_tN_s+N^2N_t+NN_rN_t+N_rN_t^2N_s^2\right)\right]$ &6.2408\\[1.5ex]
 \hline
  eMSER-MMED  &  $\mathcal{O}\left[ M^{2N_s}\left(N^3N_r+NN_tN_s+N^2N_t+NN_rN_t+N_rN_t^2N_s^2\right)\right]$ & 6.8444\\[1.5ex]
 \hline
  vMSER-MSER &  $\mathcal{O}\left[ M^{2N_s}\left(N^2N_r+N_tN_s+NN_t+N_rNN_t+ N_rN_t^2N_s^2\right) + N\sum\limits_{n = 1}^{{N_p}} {{p^{\left( n \right)}}} \right]$ &0.8149\\[1.5ex]
 \hline
  vMSER-MMED  &  $\mathcal{O}\left[ M^{2N_s}\left(N^2N_r+N_tN_s+NN_t+N_rNN_t+ N_rN_t^2N_s^2\right) + N\sum\limits_{n = 1}^{{N_p}} {{p^{\left( n \right)}}} \right]$ &0.8123\\[1.5ex]
 \hline
 Exhaustive search &$C_{\rm{ESR}}^{N}C_{\rm{ESP}}^{2N_tN_s}$ & 15,954\\[1.5ex]
 \hline
\end{tabular}
\caption{Computational complexity and CPU running time for different Reflecting-Precoding combinations.}
\end{table*}
\end{small}

\subsection{Comparison with Existing Algorithms}
To make comparisons with other existing algorithms, we introduce them firstly. 
By using the exponential upper bound of $Q$-function, another SER upper bound can be given as 
\begin{align}\label{pa}
{P_S}\left( {\pmb{\Phi} ,\textbf{F}} \right) &\le \frac{1}{{M^{N_s}}}\sum\limits_{i = 1}^{M^{N_s}} {\sum\limits_{j = 1,j \ne i}^{M^{N_s}} {\rm{exp}}\left(-\frac{\rho {d_{ij}^2\left( {\pmb{\Phi} ,\textbf{F}} \right)}}{4\sigma^2}\right)}.
\end{align}
Since the RIS adds new supplementary links to maintain the communication link, the overall system performance can be increased by the in-direct multiple-path without additional interference leading to low required SNR. For this communication situation, the exponential function can be expanded by Taylor expansion $e^{x}=\sum_{n=0}^{\infty} \frac{x^{n}}{n !}$. By only taking the first two terms as ${\rm{exp}}\left(-\frac{\rho {d_{ij}^2\left( {\pmb{\Phi} ,\textbf{F}} \right)}}{4\sigma^2}\right) \approx 1-\frac{\rho {d_{ij}^2\left( {\pmb{\Phi} ,\textbf{F}} \right)}}{4\sigma^2}$, \eqref{pa} becomes ${P_S}\left( {\pmb{\Phi} ,\textbf{F}} \right) \le -\frac{\rho}{4{M^{N_s}}\sigma^2}\sum\limits_{i = 1}^{M^{N_s}} {\sum\limits_{j = 1,j \ne i}^{M^{N_s}} {d_{ij}^2\left( {\pmb{\Phi} ,\textbf{F}} \right)}}+{M^{N_s}}-1$. Based on the given $\textbf{F}$, the objective function of the optimization problem can be changed for maximizing $\sum\limits_{i = 1}^{M^{N_s}} {\sum\limits_{j = 1,j \ne i}^{M^{N_s}} {d_{ij}^2\left( {\pmb{\phi} } \right)}}$. Based on \eqref{dis}, the optimization problem can be written as
\begin{equation}
\begin{split}
\left(\textbf{P5}\right): \;\;\;\rm{Given}: &~\textbf{H}_1, \textbf{H}_2, \textbf{H}_d, \mathcal{S}_M, \textbf{F}\\
\rm{Find}: &~\pmb{\phi}\\
\rm{Maximize}: &~\pmb{\phi}^H \pmb{\Gamma} \pmb{\phi}+\pmb{\gamma}^H\pmb{\phi}+\pmb{\phi}^H\pmb{\gamma} \\
\rm{Subject\;to}: &~\left| {{\phi _i}} \right| = 1, \forall i = 1,...,N,
\end{split}
\end{equation}
where $\pmb{\Gamma} =  \textbf{R}_{\textbf{H}2}\odot \left(\sum\limits_{i = 1}^{M^{N_s}} {\sum\limits_{j = 1,j \ne i}^{M^{N_s}} \Delta {\textbf{Q}_{ij}^T}}\right)$ and $\pmb{\gamma} =\sum\limits_{i = 1}^{M^{N_s}} {\sum\limits_{j = 1,j \ne i}^{M^{N_s}}\textbf{a}_{ij}^C} $. This problem can be treated as a homogeneous QCQP, which is similar to the (P3) shown in \cite{QW} and can be solved in the way shown in \cite{QW}. We named the solution as SDP-Reflecting, whose computational complexity can be analyzed to be $\mathcal{C}_{\rm{SDP}}
= \mathcal{O}\left[N^4+M^{2N_s}N\left(N^2N_tN_r+N_tN_s\right)\right]$ \cite{QW}.

In the following, we compare the system SER of proposed reflecting schemes with SDP-Reflecting in $\left(3, 2, 4, 2, 4, 3\right)$ system. As can be seen from Fig. \ref{fig3}, eMSER-Reflecting and vMSER-Reflecting can achieve much better performance than SDP-Reflecting. SDP-Reflecting is about $2-3$ dB worse than the other two algorithms in the high SNR regime since the exponential approximation applied in \eqref{pa} is close to the original $Q$-function only in the low SNR regime. As a result, the performance gap increases with the increasing SNR in the depicted SNR regime. It is seen that the proposed two reflecting schemes provide apparent favorable SER performance compared to the one with random reflecting. This implies that the phase shift controller can have a significant influence on the system SER. Moreover, it shows that the union bound on SER in \eqref{ob} is quite tight, which implies that the proposed designs minimizing the upper bound of SER will directly minimize the real SER.

\begin{figure}[t]
\centering
\includegraphics[width= 3.4in,angle=0]{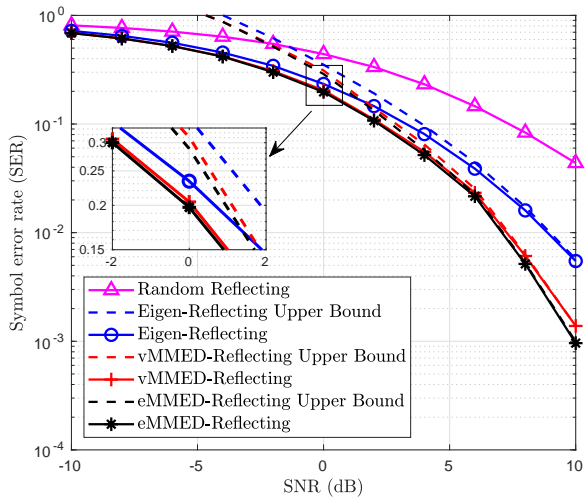}
\caption{SER comparisons among the proposed reflecting schemes and SDP-Reflecting.}
\label{fig3}
\end{figure}

Most RIS-assisted literature optimized precoder based on received signal power, SNR, EE without considering the input vectors. Based on the fact, this problem can be transformed into maximizing the received signal power subject to the maximum transmit power constraint at the transmitter, which can be formulated as
\begin{equation}
\begin{split}
\left(\textbf{P6}\right): \;\;\;\rm{Given}: &~\textbf{H}, \mathcal{S}_M, {\pmb{\Phi}} \\
\rm{Find}: &~\textbf{F}\\
\rm{Maximize}: & ~\left\| \textbf{HF} \right\|_F^2\\
\rm{Subject\;to}: &~\text{tr}\left( {\textbf{Q}} \right) \le {P_{\max }}.
\end{split}
\end{equation}

Since $\textbf{F}$ is full-rank matrix, we can easily obtain the solution as $\textbf{F} = \sqrt{\frac{P_{\max}}{N_s}} \textbf{W}$, where the $k$-th column of $\textbf{W}$ is the $k$-th eigenvector of $\textbf{H}^H\textbf{H}$ corresponding to $k$-th eigenvalue with $\lambda_k \ge \lambda_{k+1}$. We named this solution as Eigen-Precoding in the following comparison. The computational complexity of the Eigen-Precoding scheme is consumed by computing $\textbf{H}^H\textbf{H}$ and its eigenvectors, which is $\mathcal{C}_{\rm{EP}}=\mathcal{O}\left[N^2N_r+N_rNN_t+N_t^2N_r+N_t^3\right]$,
where $N^2N_r+N_rNN_t$ is the computation complexity of $\textbf{H}$, $N_t^2N_r$ is the computation complexity of $\textbf{H}^H\textbf{H}$ and $N_t^3$ is the eigenvalue decomposition complexity. In the following, we compare the system SER of proposed precoding schemes with Eigen-Precoding in $\left(3, 20, 3, 2, 4, 1\right)$ system.

\begin{figure}[t]
\centering
\includegraphics[width= 3.4in,angle=0]{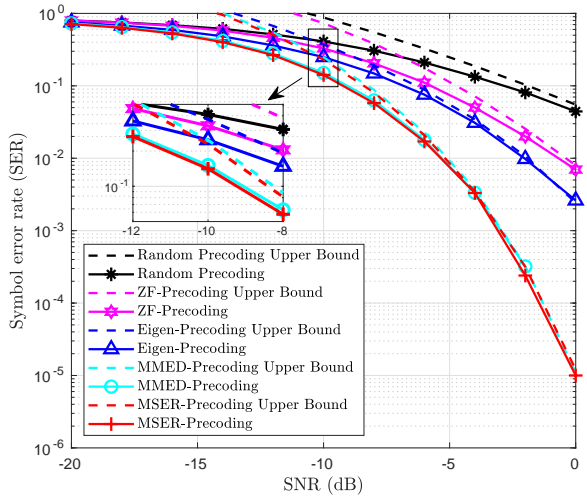}
\caption{SER comparisons among the proposed precoding schemes and Eigen-Precoding.}
\label{fig4}
\end{figure}

As can be seen from Fig. \ref{fig4}, all the proposed schemes can provide lower SER than random precoding. We can see that the proposed MSER-Precoding scheme has a slightly better performance than MMED-Precoding scheme. As analyzed in Section V, the MMED-Precoding is of lower complexity than MSER-Precoding. Thus, in realistic systems, MMED-Precoding is a more appealing scheme compared to MSER-Precoding. Eigen-Precoding is much faster than two other schemes but much worse performance. It is obvious that a big performance gap occurs between the Eigen-Precoding and the other two schemes under the given system setup. Although Eigen-Precoding has significant lower computation complexity, it is about 4-5 dB worse than MSER-Precoding and MMER-Precoding. It is validated that the algorithm designed with the Gaussian input assumption should be reconsidered when applied to systems with finite alphabet input. By comparing the proposed schemes with the classical zero-forcing (ZF) precoding, we observe that the proposed schemes can bring significant performance gain in SER. The reason behind the phenomenon is that our proposed precoding schemes aim to minimize the SER directly, while ZF precoding is designed to null the interference. Compared to the performance of random precoder, we can see the importance of well-designed precoding schemes. 

\begin{figure}[t]
\centering
\includegraphics[width= 3.4in,angle=0]{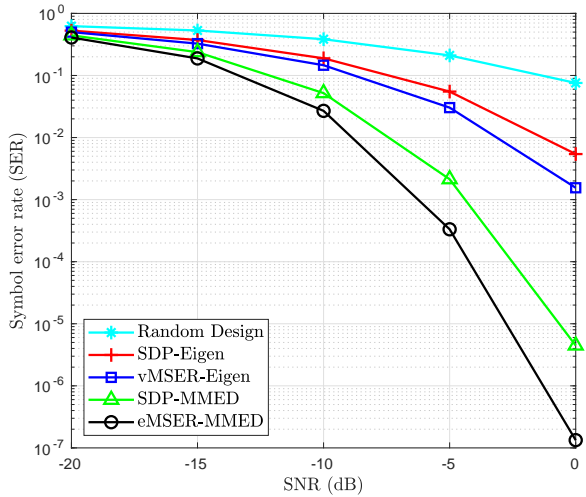}
\caption{SER comparisons among vMSER-MMED, vMSER-Eigen, SDP-MMED and SDP-Eigen.}
\label{fig5}
\end{figure}

By processing the SDP-Reflecting and Eigen-Precoding alternatively in $\left(2, 5, 3, 2, 4, 2\right)$ system, we show the simulation results in Fig. \ref{fig5}. Based on the results of Figs. 3-4, there is no doubt that SDP-Eigen would achieve worse performance. Since the SDP-Reflecting maximizing the approximation in high SNR regime and Eigen-Precoding maximizing the SNR is not equivalent to maximizing the Euclidean distances which directly affects the SER. That is why the performance of SDP-Eigen is not comparable to proposed schemes, which aim to minimize SER directly. Even with Alternating Optimization, the system SERs with Eigen-Precoding is not favorable since it does not take inputs into consideration. It proves that the algorithm designed for continues or only finite inputs cannot replace the optimization that needs to be done for limited discrete data. The influence of input symbols taken from finite and discrete constellation should be taken into consideration. 

\subsection{Impact of the Number of Reflecting Elements}
\begin{figure}[t]
\centering
\includegraphics[width= 3.4in,angle=0]{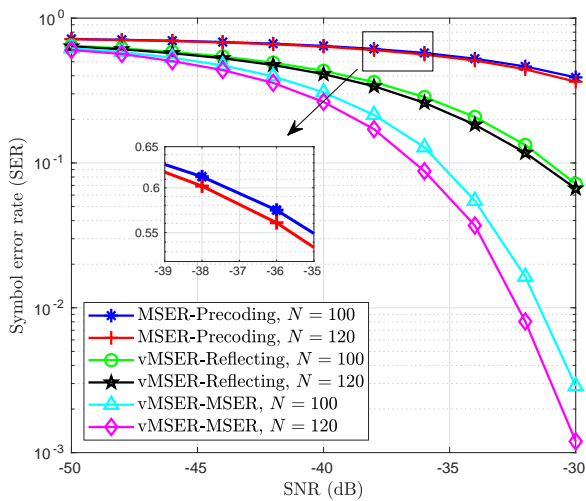}
\caption{SER comparisons among vMSER-Reflecting and MSER-Precoding, vMSER-MSER with various large $N$.}
\label{fig6}
\end{figure}
In Fig. \ref{fig6}, we show the influence of $N$ on the phase shift design, precoding design and joint design in $\left(3, 100, 3, 2, 2, 2\right)$ and $\left(3, 120, 3, 2, 2, 2\right)$ system. From Fig. \ref{fig6}, we can see that phase shift design will play a more important role than precoding design on affecting SER when $N$ is large. There is no obvious SER gain with a larger $N$ if the reflecting is not carefully designed. Obviously, in Fig. \ref{fig6}, the joint design can gain optimal performance at the cost of more iteration to gain a better solution. There is 1 dB SER gain with 20 more reflecting elements if both the reflecting and precoding are optimized. In conclusion, the system SER can be significantly decreased by equipping more reflecting elements at the RIS and using the proposed reflecting design. $N$ will not provide any SER gain without phase shift design. 

\subsection{Superiority in Presence of CSI Estimation Errors}
Although we were assuming all involved CSI is perfectly known, the channel estimation for the RIS-assisted system is challenging because of its massive number of passive elements without any signal processing capability. The authors in \cite{BZ} proposed a practical transmission protocol based on the received pilot signals from the user to execute channel estimation. The authors in \cite{ZQ} presented a two-stage algorithm that includes the sparse matrix factorization stage and the matrix completion stage to do channel estimation, which was shown to be quite accurate. However, it is still an ideal assumption to have perfect and timely channel information in practical application. Based on the fact, the performance of proposed schemes in the presence of channel estimation errors should be probed into.
\begin{figure}[t]
\centering
\includegraphics[width= 3.4in,angle=0]{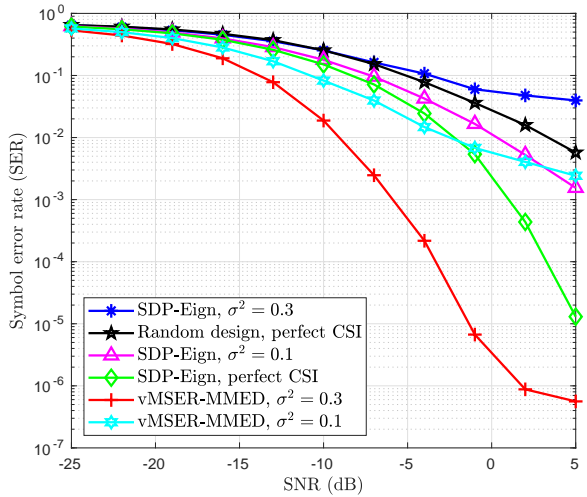}
\caption{SER comparisons in presence of CSI estimation errors.}
\label{fig7}
\end{figure}
To show the superiority of the proposed schemes in the presence of CSI estimation errors, we further evaluate the error performance of our schemes with imperfect CSI. SDP-Eigen scheme and vMSER-MMED scheme are chosen as examples. Such imperfection originates from channel estimation and/or feedback errors. The model of imperfect CSI is given by $\textbf{H'} = \textbf{H}+\textbf{H}_e$, where $\textbf{H}_e$ is the error matrix whose entries follow and i.i.d circularly symmetric complex Gaussian distribution $\mathcal{CN}\left( {\textbf{0},\sigma_e^2 }\right)$. The error performance of the joint SDP-Eigen scheme with different error levels $\sigma_e^2 = 0$ (i.e., perfect CSI), $\sigma_e^2= 0.1$, $\sigma_e^2 = 0.3$ in $\left(4, 4, 3, 2, 2, 1\right)$ system is compared in the Fig. \ref{fig7}. As expected, the error performance degrades with imperfect CSI. The SDP-Eigen scheme and vMSER-MMED scheme with imperfect CSI will achieve a floor as SNR grows. It can also be seen that proposed schemes with imperfect CSI still achieve an effective system SER optimization compared to a random joint design with perfect CSI. For example, the SDP-Eigen scheme achieves about 2-3 dB SER gain for $\sigma_e^2 = 0.1$ while vMSER-MMED scheme achieves about 10-15 dB SER gain for $\sigma_e^2 = 0.1$ and about 5-6 dB SER gain for $\sigma_e^2 = 0.3$ compared to random design with perfect CSI, which demonstrates the superiority of our design. Furthermore, vMSER-MMED scheme achieves about 2-3 dB SER gain for $\sigma_e^2 = 0.3$ in low SNR regime and about 6-7 dB SER gain for $\sigma_e^2 = 0.1$ compared with SDP-Eigen scheme with perfect CSI. This means vMSER-MMED scheme can achieve greatly better performance even with imperfect CSI since the Eigen-Precoding could not achieve a favorable performance with multiple discrete input data streams.

\subsection{Comparison with AF Relay Cooperating System}
In this subsection, we compare a   $\left(3, 5, 3, 2, 4, 1\right)$ RIS-assisted system with a full-duplex AF relaying system . The reason why using RIS is because of its low complexity and low cost compared to the relay-based system \cite{KN}. Relay stations equipped with active electronic components to receive and forward with a dedicated power source. The deployment of relays is costly and power-consuming, especially for realizing multiple-antenna designs. 
\begin{figure}[t]
\centering
\includegraphics[width= 3.4in,angle=0]{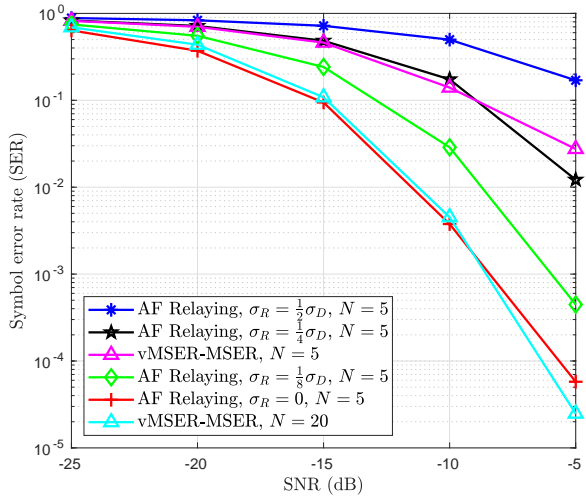}
\caption{SER comparisons for RIS-assited system and AF relaying system.}
\label{fig8}
\end{figure}

For fair comparison in performance,  we assume there is a conventional $N$-antenna AF relay in the place of the RIS structure under same channel realizations and input data streams, and the received signal at the destination can be expressed as $\textbf{y}_D =  \sqrt{\rho} \left(\textbf{H}_2 \textbf{V}\textbf{H}_1 +\textbf{H}_d\right)\textbf{F} \textbf{s} + \textbf{H}_2 \textbf{V}\textbf{n}_R +\textbf{n}_D$, where $ \textbf{n}_R \sim \mathcal{CN}\left( {\textbf{0},\sigma_R^2{\textbf{I}_{{N_r}}}} \right)$, $ \textbf{n}_D \sim \mathcal{CN}\left( {\textbf{0},\sigma_D^2{\textbf{I}_{{N_r}}}} \right),$ $\textbf{V}$ is the diagonal AF matrix and $\textbf{n}_R$ is the additive noise at the AF relay. We jointly optimize $\textbf{V}$ and $\textbf{F}$ alternatively. The optimization for precoder and AF matrix is similar to the optimization for precoder of RIS-assisted system with power constraint. As can be seen from Fig. \ref{fig8}, AF relaying system can achieve much lower SER compared to the same $N$ RIS-assisted system when there is no noise at $R$. The reason is that AF relay consumes additional power to forward received signals, while there is no power gain in RISs. However, we can also see that the performance gap between the RIS-assisted system and AF relaying systems decreases as the variance of noise at relay increasing. As AF-relaying introduces more noise power, the superiority of performance would eliminate. Moreover, the RIS-assisted system outperforms the AF-relaying system by equipping more reflecting elements. Another $10$ reflecting elements can offer more $7-8$ dB SER gain and achieve almost the same performance to the noise-free AF relaying system. It means the system performance can be greatly improved in a low cost and energy-saving way.

\section{Conclusion}
In the paper, several joint reflecting and precoding schemes were proposed to minimize the SER in point-to-point RIS-assisted MIMO systems assuming finite alphabet input. The proposed reflecting and precoding schemes can achieve a favorable system performance with different computation complexity. A much lower complex algorithm based on SNR without considering the influence of input data streams can not be used to replace the finite discrete input system. Simulation results demonstrate that our proposed algorithm can significantly improve SER performance with different scheme combinations while reflecting schemes will play a more critical role than precoding schemes. The RIS-assisted system with proper reflecting and precoding design has good robustness with imperfect CSI. By equipping more reflecting elements, it can improve the system performance significantly and surpass the one of relaying systems with consuming additional power.

\end{document}